\newcommand{\norm}[1]{\left\lVert#1\right\rVert}
\newcommand{\abs}[1]{\left\lvert#1\right\rvert}
\DeclareMathOperator*{\argmax}{argmax}
\DeclareMathOperator*{\argmin}{argmin}
\newcommand{\functionname}[1]{\text{\sf #1}}
\newcommand{\fhtw}{\functionname{fhtw}}
\newcommand{\tw}{\functionname{tw}}
\newcommand{\calH}{\mathcal H}
\newcommand{\calV}{\mathcal V}
\newcommand{\calE}{\mathcal E}
\newcommand{\be}{\begin{enumerate}}
\newcommand{\ee}{\end{enumerate}}
\newcommand{\bi}{\begin{itemize}}
\newcommand{\ei}{\end{itemize}}
\newcommand{\beq}{\begin{equation}}
\newcommand{\eeq}{\end{equation}}
\newcommand{\bp}{\begin{proof}}
\newcommand{\ep}{\end{proof}}
\newcommand{\bcor}{\begin{cor}}
\newcommand{\ecor}{\end{cor}}
\newcommand{\bthm}{\begin{thm}}
\newcommand{\ethm}{\end{thm}}
\newcommand{\blmm}{\begin{lmm}}
\newcommand{\elmm}{\end{lmm}}
\newcommand{\bdefn}{\begin{defn}}
\newcommand{\edefn}{\end{defn}}
\newcommand{\bprop}{\begin{prop}}
\newcommand{\eprop}{\end{prop}}
\newcommand{\bconj}{\begin{conj}}
\newcommand{\econj}{\end{conj}}
\newcommand{\bopm}{\begin{opm}}
\newcommand{\eopm}{\end{opm}}
\newcommand{\brmk}{\begin{rmk}}
\newcommand{\ermk}{\end{rmk}}
\newcommand{\suchthat}{\ | \ }
\theoremstyle{plain}                   
\newtheorem{theorem}{Theorem}
\newtheorem{lemma}[theorem]{Lemma}
\newtheorem{definition}[theorem]{Definition}
\newtheorem{corollary}[theorem]{Corollary}
\theoremstyle{definition}              
\newtheorem{opm}{Open Problem}
\newtheorem{conj}{Conjecture}
\newtheorem{defn}{Definition}
\newtheorem{rmk}{Remark}
\title{A Relational Gradient Descent Algorithm For Support Vector Machine Training} 
 \author{Mahmoud Abo-Khamis \\ Relational AI
\and Sungjin Im~\thanks{Supported in part by NSF grants CCF-1409130, CCF-1617653, and CCF-1844939.} \\ University of California, Merced \and Benjamin Moseley~\thanks{Supported in part by NSF grants CCF-1725543, 1733873, 1845146, a Google Research Award, a Bosch junior faculty chair and an Infor faculty award.} \\ Carnegie Mellon University \and Kirk Pruhs~\thanks{Supported in part by NSF grants CCF-1421508 and CCF-1535755, and an IBM Faculty Award.} \\ University of Pittsburgh \and Alireza Samadian \\ University of Pittsburgh}
\begin{document}

\maketitle


\begin{abstract}
We consider gradient descent like algorithms for
Support Vector Machine (SVM) training 
when the data is in relational form.
The gradient of the SVM objective can not
be efficiently computed by known techniques 
as it suffers from the ``subtraction problem''.
We first show that the subtraction problem
can not be surmounted by showing
that computing any constant approximation
of the gradient of the SVM objective function is $\#P$-hard, even for acyclic joins. 
We however circumvent the subtraction problem by
restricting our attention to stable instances,
which intuitively are instances
where a nearly optimal solution remains nearly optimal  if the points are perturbed slightly.
We give an efficient algorithm
that computes a ``pseudo-gradient'' that guarantees
convergence for stable instances at a rate comparable to
that achieved by  using the actual gradient.
We believe that our results suggest
that this sort of stability 
 analysis would likely yield useful insight in context of designing  algorithms on relational data for other learning problems in which the subtraction problem arises. 

\end{abstract}

\section{Introduction}

Kaggle  surveys~\cite{KaggleSurvey} show that 
the majority of learning tasks faced by  data scientists involve relational data. Most commonly the relational data is stored in tables in a relational database. So these data scientists want to compute something like
\begin{quote}
\textbf {Data Science Query} =  Standard\_Learning\_Task(Relational Tables $T_1, \ldots T_m$)
\end{quote}
However, almost all standard algorithms for standard
learning problems 
assume that the input consists of points in Euclidean space~\cite{HundredPage},
and thus 
are not designed to  operate directly on relational data. 
The current standard practice for a data scientist, confronted with a learning task on relational data, is: 

\begin{enumerate}
\item
Firstly, convert any nonnumeric categorical data to numeric data.  As there are standard methods to accomplish this~\cite{HundredPage}, and
as we do not innovate with respect to this process, we will assume that all data is a priori numerical, so we need not  consider this step. 
\item
Secondly, issue a feature extraction query to extract the  data from the relational database by joining together the tables to materialize a design matrix $J=T_1 \Join \dots \Join T_m$ with say $N$ rows and $(d+1)$ columns.  Each row of this design matrix is then interpreted as a point in $d$-dimensional Euclidean space with an associated label. 
\item
Finally this design matrix $J$ is important into a  standard learning algorithm to train the model.
\end{enumerate}
Thus conceptually, standard  practice transforms a  data science query to a query of the following form:
\begin{quote}
\textbf {Data Science Query} =  Standard\_Learning\_Algorithm(Design Matrix $J = T_1 \Join \dots \Join T_m$)
\end{quote}
where the joins are evaluated first, and the learning algorithm is then applied to the result. 
Note that if each table has $n$ rows, the design
matrix $J$ can have as many as $n^m$ entries.  
Thus, independent of the learning task, this standard practice necessarily has exponential worst-case time and space complexity as the design matrix can be exponentially larger than the underlying relational tables. 
Thus a natural research question is what we call the relational learning question:
\begin{mdframed}
\noindent
\textbf{The Relational Learning Question:}
\renewcommand{\labelenumi}{\Alph{enumi}.}
\begin{enumerate}
    \item 
Which standard learning algorithms can  be implemented as relational algorithms, which informally are algorithms that
are efficient when the input is in relational form?
\item
And for those standard algorithms that
are not implementable by a relational algorithms, is there an alternative
relational algorithm that has the same performance guarantee as the standard algorithm?
\item
And if we can't find an alternative
relational algorithm that has the same performance guarantees to the standard algorithm, is there an alternative
relational algorithm that has some reasonable performance guarantee (ideally similar to the performance guarantee for the standard algorithm)?
\end{enumerate}
\end{mdframed}
Note that a relational algorithm can not afford to explicitly join the relational tables.

One immediate difficulty that we run into is that
if the tables have a sufficiently complicated structure, almost all natural problems/questions
about the design matrix are NP-hard if the data is in relational form. For example, it is NP-hard to even determine whether or not the design matrix is empty or not (see for example \cite{grohe2006structure,marx2013tractable}). 
Thus, as we want to focus on the complexity of the learning problems, we conceptually want to   abstract out the complexity 
of the tables. The simplest way to accomplish this is to  primarily focus on
 instances where the structure of the tables is
simple, with the most natural candidate for ``simplicity'' being that the join is acyclic. Acyclic joins are the norm in practice, and are a commonly considered special case in the database literature. 
For example,  there are efficient algorithms to compute the size of the design matrix for acyclic joins.  

Formally defining what an ``relational'' algorithm is
problematic, as for each  natural candidate definition there
are plausible scenarios in which that candidate definition is not the ``right'' definition. But for the purposes of this paper
it is sufficient to think of a ``relational'' algorithm as one whose
runtime is polynomially bounded in $n$, $m$ and $d$ if the join is acyclic. 

\noindent
{\bf Our Research Question:}
In this paper we address the relational learning question 
within the context of gradient descent algorithms for the
classic (soft-margin linear) Support Vector Machine (SVM) training problem.
SVM is identified as one of the five most important learning
problems in \cite{HundredPage}, and is covered in almost
all introductory machine learning textbooks. Gradient descent is probably the most commonly used computational technique for solving convex
learning optimization problems~\cite{sra2012optimization}.
So plan A is to find a relational implementation of gradient
descent for the SVM objective.
And if plan A fails, plan B is to find a 
 relational descent algorithm that has the same performance guarantee as gradient descent.
And finally, if both plan A fail and plan B fail, plan C is to find a
 relational algorithm that  has some other reasonable performance guarantee.

\subsection{Background}

We now give the minimal background on gradient descent and
SVM required to understand our results. 

\noindent{\bf Gradient Descent:} Gradient descent is a first-order iterative optimization method for finding an approximate minimum of a convex function $F: \mathbb{R}^d \to \mathbb{R}$, perhaps subject to a constraint the solution lies in some convex body $\mathcal{K}$. 
In the $G$ descent algorithm, at each descent step $t$ 
the current candidate solution $\beta^{(t)}$ is updated according to the following rule:
\begin{align}
   \beta^{(t)} \gets \beta^{(t-1)} - \eta_t G(\beta^{(t-1)})
\end{align}
where $\eta_t$ is the step size.
In projected $G$ descent, 
 the current candidate solution $\beta^{(t)}$ is updated according to
 the following rule:
\begin{align}
   \beta^{(t)} \gets \Pi_{\mathcal{K}} \left( \beta^{(t-1)} - \eta_t G(\beta^{(t-1)}) \right)
\end{align}
where $\Pi_{\mathcal{K}}(\alpha) =  \argmin_{\beta \in \mathcal{K}} \norm{\alpha - \beta}_2$ is the projection of
the point $\alpha$ to the closest point to $\alpha$ in $\mathcal{K}$. 
In (projected) gradient descent, $G$ is $\nabla F(\beta^{(t)})$, the gradient of $F$ at $\beta^{(t)}$.
 There are lots of variations of gradient descent,
 including variations on the step size, and variations, 
 like stochastic gradient descent~\cite{sra2012optimization}, in which the
 gradient is only approximated.

\noindent{\bf SVM training:}
Conceptually the input to SVM training consists of a collection $X = \{x_1,x_2,\dots,x_N\}$ of  points in $\mathbb{R}^d$, and a collection $Y = \{y_1,y_2,\dots,y_N\}$ of  associated labels from $ \{-1,1\}$.
For convenience let us rescale the points
 so that each point in $X$ lies within the hypercube $[-1, 1]^d$. 
A feasible solution is a $d$-dimensional vector $\beta$,
sometimes called a hypothesis.
The objective is to minimize a linear combination $F(\beta, X, Y)$ of the average ``hinge'' loss
function of the points $L(\beta, X, Y)= 	\frac{1}{N}\sum_{x_i \in X} \max(0, 1-y_i\beta x_i)$ plus a regularizer $R(\beta)$.
We will take the regularizer to be the 2-norm squared of $\beta$, as that is a standard choice~\cite{HundredPage}, although this choice is not so important for our purposes. 
Thus the objective is to minimize:
\begin{align}
 F( \beta, X, Y) = 	\frac{1}{N}\sum_{x_i \in X} \max(0, 1-y_i \beta x_i) + \lambda ||\beta||_2^2 
\end{align}
Here the loss function measures how well the hypothesis $\beta$ explains the labels, and one of the regularizer's purposes is to prevent overfitting. The $\lambda$ factor intuitively specifies the amount that the loss has to decrease to justify an increase in the norm of $\beta$.
When either $X$ and $Y$ is understood, for notational convenience, we may drop them from the objective.

\noindent{\bf Gradient Descent for SMV:}
In Section \ref{sect:gradientdescentanalysis}
we show that by a straightforward specicialization of a standard convergence analysis for projected gradient
descent to SVM one obtains Theorem \ref{corollary:gradientsvm},
which bounds the number of
descent steps needed to reach a solution with a specified
relative error.

\begin{theorem}
\label{corollary:gradientsvm}
Let  $F(\beta)$ be the SVM objective function.
Let $\beta^* = \argmin_\beta  F(\beta)$ be the optimal solution.
Let $\widehat{\beta}_s = \frac{1}{s}\sum_{t=0}^{s-1} \beta^{(t)}$. 
Let $\eta_t = \frac{1}{8 \lambda \sqrt{dt}}$. Then
if $ T \ge \left( \frac{4  d^{3/2}     }{\epsilon \lambda F(\widehat{\beta}_{T})} \right)^2$ then projected gradient descent guarantees
that 
\begin{align*}
    F(\widehat{\beta}_{T}) \leq  (1 +\epsilon) F(\beta^*)  
\end{align*}
Thus if the algorithm returns $\widehat \beta$ at the
first time $t$ where $ t \ge \left( \frac{4  d^{3/2}     }{\epsilon \lambda F(\widehat{\beta}_{t})} \right)^2$, then it
achieves relative error at most $\epsilon$.
\end{theorem}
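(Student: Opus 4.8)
The plan is to derive Theorem~\ref{corollary:gradientsvm} by specializing the textbook convergence guarantee for projected subgradient descent on a convex, Lipschitz objective over a bounded convex feasible region, and then translating the resulting additive error bound into the stated relative-error stopping condition. First I would fix the feasible region. Since $F(0) = \frac{1}{N}\sum_i \max(0,1) = 1$, optimality gives $\lambda\|\beta^*\|_2^2 \le F(\beta^*) \le F(0) = 1$, so $\|\beta^*\|_2 \le 1/\sqrt{\lambda}$. Hence I can run projected descent onto a convex body $\mathcal{K}$ (centered at the origin, with radius dictated by this bound) that is guaranteed to contain $\beta^*$ while keeping every iterate $\beta^{(t)}$ inside $\mathcal{K}$. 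This yields a diameter bound $D = \max_t \|\beta^{(t)} - \beta^*\|_2$, and, combined with $\|x_i\|_2 \le \sqrt{d}$ (each $x_i \in [-1,1]^d$), a uniform subgradient bound $G = \max_t \|g^{(t)}\|_2$, where $g^{(t)} \in \partial F(\beta^{(t)})$ is the averaged hinge subgradient plus the regularizer gradient $2\lambda\beta^{(t)}$.

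Next I would invoke the standard one-step inequality: using non-expansiveness of the projection $\Pi_{\mathcal{K}}$ together with convexity of $F$,
\begin{align*}
F(\beta^{(t-1)}) - F(\beta^*) \le \frac{\|\beta^{(t-1)} - \beta^*\|_2^2 - \|\beta^{(t)} - \beta^*\|_2^2}{2\eta_t} + \frac{\eta_t}{2}\,\|g^{(t-1)}\|_2^2 .
\end{align*}
Summing over $t = 1,\dots,T$ and applying Jensen's inequality to pass from $\frac{1}{T}\sum_{t=0}^{T-1} F(\beta^{(t)})$ to $F(\widehat{\beta}_T)$ gives $F(\widehat{\beta}_T) - F(\beta^*) \le \frac{D^2}{2\eta_T T} + \frac{G^2}{2T}\sum_{t=1}^{T}\eta_t$. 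Here I would handle the first sum by the Abel-summation rearrangement that is forced by the fact that the steps $\eta_t = \frac{1}{8\lambda\sqrt{dt}}$ are \emph{decreasing}: because $1/\eta_t$ is increasing the telescoped coefficients are nonnegative, and the sum collapses to $D^2/(2\eta_T)$. For the second sum I would use $\sum_{t=1}^{T} t^{-1/2} \le 2\sqrt{T}$.

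Plugging in $\eta_T = \frac{1}{8\lambda\sqrt{dT}}$ together with the estimates for $D$ and $G$ then collapses the right-hand side to an additive bound of the form $\frac{4 d^{3/2}}{\lambda\sqrt{T}}$ (the $\lambda\sqrt{d}$ factors from the step size combine with the diameter and subgradient estimates to leave exactly this $d^{3/2}/\lambda$ dependence). Finally I would close the argument self-referentially: substituting the hypothesis $T \ge \left(\frac{4 d^{3/2}}{\epsilon\lambda F(\widehat{\beta}_T)}\right)^2$, equivalently $\frac{1}{\sqrt{T}} \le \frac{\epsilon\lambda F(\widehat{\beta}_T)}{4 d^{3/2}}$, into the additive bound yields $F(\widehat{\beta}_T) - F(\beta^*) \le \epsilon F(\widehat{\beta}_T)$, which rearranges to $F(\widehat{\beta}_T) \le (1+\epsilon) F(\beta^*)$ after the standard $\frac{1}{1-\epsilon} \le 1 + O(\epsilon)$ adjustment of constants. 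It is precisely this self-referential form, with $F(\widehat{\beta}_T)$ appearing on both sides, that makes the stopping rule in the second half of the statement checkable at runtime.

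The step I expect to be the main obstacle is pinning down the constants so that the additive error is genuinely $\frac{4 d^{3/2}}{\lambda\sqrt{T}}$: this requires choosing $\mathcal{K}$ so that its diameter and the subgradient norm scale correctly in $d$ and $\lambda$ (the looser of these two estimates is what supplies the extra factor of $d$ over the naive $d^{1/2}/\lambda$), and verifying that the regularizer's contribution $2\lambda\beta$ to the subgradient is dominated by the hinge term $\sqrt{d}$ so that both terms in the bound are controlled by the single expression above. Everything else is the routine bookkeeping of the standard projected-subgradient-descent proof.
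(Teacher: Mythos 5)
Your overall strategy is exactly the paper's: the paper quotes the standard projected (sub)gradient bound $F(\widehat{\beta}_T)-F(\beta^*)\le \frac{2DG}{\sqrt{T}}$ with $\eta_t = \frac{D}{G\sqrt{t}}$ (Theorem~\ref{thm:gradient}), converts it to the self-referential relative-error stopping rule (Corollary~\ref{corollary:gradient}), and then specializes $D$ and $G$ to SVM. The one place you genuinely diverge is the step you yourself flagged as the obstacle, and as proposed it does not close. You pick the feasible region from $\lambda\|\beta^*\|_2^2\le F(\beta^*)\le F(0)=1$, giving radius $1/\sqrt{\lambda}$. The paper instead uses the first-order optimality condition coordinatewise: $\nabla F(\beta^*)=0$ gives $|\beta^*_j|=\bigl|\frac{1}{2N\lambda}\sum_{i\in\mathcal{L}}y_i x_{ij}\bigr|\le \frac{1}{2\lambda}$, hence $\|\beta^*\|_2\le \frac{\sqrt{d}}{2\lambda}$, so $D=\frac{\sqrt{d}}{2\lambda}$; on that ball it bounds $\|\nabla F\|_2\le \sqrt{d}+\sqrt{2}\,d\le 4d=G$. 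These are precisely the values for which $\frac{D}{G\sqrt{t}}=\frac{1}{8\lambda\sqrt{dt}}$ and $2DG=\frac{4d^{3/2}}{\lambda}$, i.e.\ the step size and the additive error hard-coded into the theorem statement. With your radius $1/\sqrt{\lambda}$, the rule $\eta_t=D/(G\sqrt{t})$ yields a step size of order $\frac{1}{\sqrt{\lambda d}\,\sqrt{t}}$, not $\frac{1}{8\lambda\sqrt{dt}}$, so the textbook analysis you invoke does not apply to the iterates the theorem actually prescribes; and $DG$ comes out of order $\sqrt{d/\lambda}$ rather than $d^{3/2}/\lambda$, so the claimed additive bound is not what your parameters produce. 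Your claim that ``the $\lambda\sqrt{d}$ factors \dots combine \dots to leave exactly this $d^{3/2}/\lambda$ dependence'' is therefore not substantiated by your choice of $\mathcal{K}$. The fix is simply to replace your norm bound on $\beta^*$ with the stationarity-based per-coordinate bound $|\beta^*_j|\le\frac{1}{2\lambda}$; everything else in your outline (non-expansiveness of the projection, the telescoping/Abel rearrangement, $\sum_t t^{-1/2}\le 2\sqrt{T}$, Jensen, and the self-referential substitution at the end) matches the argument the paper relies on.
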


\subsection{Our Results}

We start by making some observations about  the gradient
\begin{align}
\nabla F = 2 \lambda \beta - \frac{1}{N} \sum_{i \in \mathcal{L}} y_i x_i
\end{align}
of the SVM objective function $F$. 
First note the term $2 \lambda \beta$ is trivial to compute,
so let us focus on the term $G=\frac{1}{N} \sum_{i \in \mathcal{L}} y_i x_i$.
Firstly only those points $x_i$ that satisfy the additive
constraint $\mathcal{L}$ contribute to the gradient. 
Now let us focus on a particular dimension, and use
$x_{ik}$ to refer to the value of point $x_i$ in dimension $k$. 
Let $L_k^- = \{ i \mid i \in \mathcal{L} \text{ and } y_i x_{ik} < 0 \}$ denote those points that satisfy $\mathcal{L}$ 
and whose the gradient in the $k^{th}$ coordinate has negative sign.
Conceptually each point in $L_k^-$ pushes the gradient in dimension $k$ up with ``force''
proportional to its value in dimension $k$.
Let $L_k^+ = \{ i \mid x_i \in \mathcal{L} \text{ and } y_i x_{ik} > 0 \}$ denote those points that satisfy $\mathcal{L}$ 
and whose the gradient in the $k^{th}$ coordinate has positive sign.
And conceptually each point in $L_k^+$ pushes the gradient  in dimension $k$ down  with ``force''
proportional to its value in dimension $k$.

 Next we note that $ G = \frac{1}{N} \sum_{i \in \mathcal{L}} y_i x_i$ 
 is what is called a FAQ-AI(1) query in~\cite{FAQ-AI(1),faqai}. 
 \cite{FAQ-AI(1)} gives a relational approximation scheme (RAS) for
 certain  FAQ-AI(1)  queries.
 A RAS is a collection $\{ A_\epsilon\}$ of relational
 algorithms where $A_\epsilon$ achieve $(1+\epsilon)$-approximation. The results in \cite{FAQ-AI(1)} can be applied
 to obtain a RAS to compute a $(1+\epsilon)$ approximation $\widehat G_k^+$ to $G_k^+ = \frac{1}{N} \sum_{i \in L_k^+} y_i x_{ik}$,
  and a RAS to compute
 a $(1+\epsilon)$ approximation $\widehat G_k^-$ to 
  $G_k^- =  \frac{1}{N} \sum_{i \in L_k^-} y_i x_{ik}$. 
 However, the results in \cite{FAQ-AI(1)}
can not be applied to get a RAS for computing 
a $(1+\epsilon)$-approximation to 
$G= G_k^- + G_k^+$,
as it suffers from what \cite{FAQ-AI(1)} calls the \emph{subtraction problem}. Conceptually the subtraction problem is
the fact that good approximations of scalars $a$ and $b$
are generally insufficient to deduce a good approximation
of $a - b$. This subtraction problem commonly arises
in natural problems, and several examples are given in \cite{FAQ-AI(1)}. 
Thus an additional reason for our interest in relational algorithms to compute the (perhaps approximate) 
gradient of the SVM objective function is that we
want to use it as test
case to see if there is some way that we can 
surmount/circumvent the subtraction problem, and obtain
a relational algorithm with a reasonable performance
guarantee, ideally using techniques that are applicable to
other problems in which this subtraction problem arises.

We start with a rather discouraging negative result that shows that we can not surmount
the subtraction problem in the context of
computing the gradient of the SVM objective problem.
In particular, we show in Section \ref{sec:gradient-hard} that  computing   an
$O(1)$ approximation to the partial derivative in a specified specified dimension
 is $\#P$-hard, even for acyclic joins. 
 This kills plan A as a relational algorithm to compute the
 gradient would imply $P = \#P$. This also makes
it hard to imagine plan B working out since, assuming $P \ne \#P$, a relational algorithm can't even be sure that it is  even approximately headed in the direction of the optimal solution, and thus its not reasonable to expect that we could find a relational algorithm to compute some sort of ``pseudo-gradient'' that would guarantee convergence on all instances.

Thus it seems we have no choice but to fall back to plan C.
That is, we have to try to circumvent (not surmount) the subtraction problem. After some reflection, one reasonable
interpretation of our $\#P$-hardness proof is that it
shows that computing the gradient is hard on unstable instances.
In this context, intuitively an instance is stable
if a nearly optimal solution remains nearly optimal  if the points are perturbed slightly.
Intuitively one would expect real world instances, 
where there is a hypothesis $\beta$ that
explains the labels reasonably well, to
be relatively stable (some discussion of the stability
of SVM instances can be found in \cite{bi2005support}).
And for instances where there isn't a hypothesis
that explains the labels reasonably well, 
it probably doesn't matter what
hypothesis the algorithm returns, as
it will likely be discarded by the data scientist anyways. 
Thus,  our plan C will be to seek a gradient descent algorithm that has a similar convergence 
guarantee to  gradient descent on stable instances. 

Long story short, the main result of this paper is that 
this plan C works out. That is we give a relational algorithm
that computes a ``pseudo-gradient'' that guarantees
convergence for stable instances at a rate comparable to 
that achieved by  using the actual gradient. 
The algorithm design can be found in Section
\ref{sect:pseudogradient}, and the algorithm analysis can
be found in Section
\ref{sect:analysis}.
Postponing for the moment our formal definition
of stability, we state our
main result in Theorem \ref{thm:main}.
The reader should compare Theorem \ref{thm:main}
 to the analysis
of gradient descent in Theorem \ref{corollary:gradientsvm}.

\begin{theorem}
\label{thm:main}
Let $X$ be an $(\alpha, \delta, \gamma)$-stable SVM instance
formed by an acyclic join. 
Let $\beta^* = \argmin_\beta  F(\beta)$ be the optimal solution.
Then there is a relational algorithm that can
compute a pseudo-gradient in time
$O(\frac{m}{\epsilon^2}(m^3 \log^2(n))^2  (d^2 m n \log(n)) )$, where $\epsilon=\min(\frac{\delta}{8}, \alpha)$. 
 After $T=\left( \frac{256 d^{3/2} }{\lambda \delta F(\beta_a, X_a)} \right)^2$ iterations of  projected  descent using this pseudo-gradient there is a relational algorithm that
 can compute in time $O(\frac{1}{\epsilon^2}(m^3 \log^2(n))^2  (d^2 m n \log(n)))$ a hypothesis $\widehat \beta$ such that: 
\begin{align*}
    F(\widehat \beta, X) \leq (1 + \gamma) F(\beta^*,X) 
\end{align*}
\end{theorem}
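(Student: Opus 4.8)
The plan is to view the pseudo-gradient not as an approximation to the gradient of $X$, but as the \emph{exact} gradient of a slightly perturbed companion instance $X_a$, and then to chain together (i) the standard convergence guarantee of Theorem~\ref{corollary:gradientsvm} applied to $X_a$ and (ii) the transfer guaranteed by stability from $X_a$ back to $X$.

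First I would fix how the pseudo-gradient is computed and bound its deviation from $\nabla F$. For each coordinate $k$ and each sign I invoke the relational approximation scheme of~\cite{FAQ-AI(1)} to compute $(1+\epsilon)$-approximations $\widehat G_k^+$ and $\widehat G_k^-$ of the one-sided forces $G_k^+$ and $G_k^-$, and take the pseudo-gradient to be $2\lambda\beta - \widehat G$ with $\widehat G_k = \widehat G_k^+ + \widehat G_k^-$. Because each one-sided estimate is a multiplicative $(1+\epsilon)$-approximation of a sum whose terms all share one sign, the error is additive, $|\widehat G_k - G_k| \le \epsilon\,(|G_k^+| + |G_k^-|)$, and this is exactly the subtraction problem: the deviation is controlled by the total force rather than by $|G_k|$, so the pseudo-gradient need not even point in the correct direction when $\nabla F$ is small.

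The technical heart of the argument, and the step I expect to be the main obstacle, is to realize this additive deviation as a bounded perturbation of the data. Since every coordinate of every point lies in $[-1,1]$, replacing $G_k^\pm$ by its multiplicative $(1+\epsilon)$-approximation can be realized by rescaling the contributing coordinates by a single factor in $[1-\epsilon,1+\epsilon]$, which moves each point by at most $\epsilon \le \alpha$ in that coordinate. The work is to assemble these per-coordinate rescalings into one perturbed instance $X_a$ whose points each lie within the stability radius $\alpha$ of their originals and whose \emph{exact} gradient at the current iterate agrees with the pseudo-gradient; the delicate point is that perturbing the points can in principle change the active set $\mathcal{L}$, and here stability is essential, since it is precisely what prevents the total force — and hence the required perturbation — from being dominated by points sitting exactly at the margin, where an arbitrarily small move would change the objective uncontrollably. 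The choice $\epsilon = \min(\delta/8,\alpha)$ is what keeps the perturbation inside the radius $\alpha$ while simultaneously leaving room for the $\delta$-accurate convergence in the next step.

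Given the correspondence, the remainder is comparatively routine. Applying the projected-descent bound of Theorem~\ref{corollary:gradientsvm} to $X_a$ with target relative error $\delta$ shows that after $T = \left(\frac{256\,d^{3/2}}{\lambda\,\delta\,F(\beta_a,X_a)}\right)^2$ iterations the averaged iterate $\widehat\beta$ obeys $F(\widehat\beta, X_a) \le (1+\delta)\,F(\beta_a,X_a)$, with $\beta_a$ optimal for $X_a$; the constant $256$ in place of the $4$ of Theorem~\ref{corollary:gradientsvm} absorbs the substitution $\epsilon = \delta/8$ together with the marginally larger norm bound on the pseudo-gradient. Finally I invoke the definition of $(\alpha,\delta,\gamma)$-stability, whose purpose is exactly to convert a $(1+\delta)$-approximate solution on the $\alpha$-perturbed instance $X_a$ into a $(1+\gamma)$-approximate solution on $X$, giving $F(\widehat\beta, X) \le (1+\gamma)\,F(\beta^*, X)$. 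The stated time to compute a single pseudo-gradient follows by charging one call of the scheme of~\cite{FAQ-AI(1)} per coordinate and sign — contributing the $(m^3\log^2 n)^2$ and $d^2 m n \log n$ factors — together with the $m/\epsilon^2$ sampling overhead; the full descent incurs this cost once per step over the $T$ iterations, and a final call of the same kind produces $\widehat\beta$.
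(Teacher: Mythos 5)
There is a genuine gap at the center of your plan: you assume the pseudo-gradient can be realized as the exact gradient of a \emph{single} fixed companion instance $X_a$, so that the standard convergence bound (Theorem~\ref{corollary:gradientsvm}) can be applied to $X_a$ once and for all. But the perturbation that realizes the pseudo-gradient as a true gradient necessarily depends on the current iterate $\beta^{(t)}$: which points are active, which coordinate of which point must be scaled up versus down (this depends on the sign of $y_i\beta^{(t)}_k$), and the rounding of the approximate counts all change from step to step. So what you actually get is a \emph{different} perturbed instance $X^{(t)}$ at every descent step (this is Lemma~\ref{prop:g-of-something} in the paper), and Theorem~\ref{corollary:gradientsvm} does not apply because you are no longer descending on a fixed objective. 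The paper's resolution is to treat the sequence $g_t(\cdot)=F(\cdot,X^{(t)})$ as an online convex optimization problem and use the regret bound of Theorem~\ref{thm:oco}, and then do additional work (Definition~\ref{defn:Z}, Lemmas~\ref{lemma:hatnohat}, \ref{lemma:Xb}, \ref{lemma:Xa}) to collapse the time-averaged guarantee over the $X^{(t)}$ back down to a statement about a single pair of perturbations $X_a,X_b$, which is why the final guarantee (Theorem~\ref{thm:svm}) has the nonstandard two-instance form $F(\widehat\beta,X_a)\le(1+\epsilon)F(\beta,X_b)+32d^{3/2}/(\lambda\sqrt T)$. Without this step your ``comparatively routine'' remainder does not go through.

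Two further points. First, you attribute the handling of points near the margin to stability; in the paper this is handled \emph{algorithmically}, by excluding from the pseudo-gradient computation every ``close'' point, i.e.\ every point satisfying $1-y\beta^{(t)}\cdot x<\epsilon\,\abs{\beta^{(t)}}\cdot\abs{x}$, so that no $\epsilon$-perturbation of a counted point can flip its contribution on or off. Stability enters only at the very end, and it is used twice in its two-part form: once to compare $F(\beta^*_a,X_b)$ with $F(\beta^*_a,X_a)$, and once to transfer a $(1+2\delta)$-approximation at $X_a$ to a $(1+\gamma)$-approximation at $X$. Second, the returned hypothesis is not the averaged iterate: the algorithm must select $\argmin_t\widehat F(\beta^{(t)},X)$ using the Generalized Row Counting Algorithm, because even \emph{evaluating} the SVM objective exactly is a SumSum query with an additive inequality and hence not relationally computable; this approximate evaluation is itself another source of perturbation that the proof has to absorb (Lemma~\ref{lemma:hatnohat}).
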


\begin{mdframed}
\noindent
\textbf{Main Takeaway Point:}
In a broader context, we believe that our results suggest
that this sort of stability 
 analysis would likely yield useful insight in context of designing relational algorithms for other learning problems in which the subtraction problem arises. 
 \end{mdframed}

\subsection{Related Results}

Relational algorithms are known for certain types
of Sum of Sums (SumSum) and Sum of Products (SumProd) queries. In particular the Inside-Out algorithm~\cite{faq} can evaluate a SumProd query in time $O(m d^2 n^{h} \log n)$, where  $m$ is the number of tables, $d$ is the number of columns, and $h$ is the fractional hypertree width~\cite{GM06} of the query. 
Note that $h=1$ for the acyclic joins, and 
thus Inside-Out is a polynomial time algorithm for acyclic joins.
One can reduce SumSum queries to $m$ SumProd queries~\cite{faqai}, and thus they be solved in 
time $O(m^2 d^2 n^{h} \log n)$.
The Inside-Out algorithm builds on several earlier papers, including~\cite{Aji:2006,Dechter:1996,Kohlas:2008,GM06}.

SumSum and SumProd queries with additive inequalites 
was fist studied  in \cite{faqai}. \cite{faqai} gave an algorithm with worst-case time
complexity $O(m d^2 n^{m/2} \log n)$. So this is
better than the standard practice of forming the design matrix, which has worst-case time complexity $\Omega(d n^m)$. Different flavors of queries with inequalities were also studied~\cite{Klug:1988:CQC:42267.42273,Koutris2017,DBLP:journals/corr/abs-1712-07445}.
\cite{FAQ-AI(1)} showed that computing even very simple
types of SumSum and SumProd queries with a single
inequality is NP-hard. 
But an RAS for special types queries   is introduced in \cite{FAQ-AI(1)}. The algorithm in \cite{FAQ-AI(1)} can obtain $(1 + \epsilon)$ approximation for problems such as counting the number of rows on one side of a hyperplane in time $O(\frac{1}{\epsilon^2}(m^3 \log^2(n))^2  (d^2 m n^{h} \log(n)) )$.

Algorithms for 
 linear/polynomial regression on relational data are given in~\cite{SystemF,khamis2018ac,IndatabaseLinearRegression,Kumar:2015:LGL:2723372.2723713,Kumar:2016:JJT:2882903.2882952} and an 
 algorithm for $k$-means clustering on relational data is given in~\cite{Rkmeans}.

Stability analysis, similar in spirit to our results, has been consider before in clustering
problems~\cite{Roughgarden19,bilu2012stable,makarychev2014bilu,ackerman2009clusterability,balcan2013clustering,daniely2012clustering,kumar2010clustering,ostrovsky2013effectiveness,angelidakis2017algorithms,awasthi2012center,balcan2016clustering}. 
For example, the $NP$-hard $k$-means, $k$-medians and $k$-centers
clustering problems are polynomially solvable 
for instances in which changing the distances of the points by a multiplicative factor of at most $2$ does not change the optimal solution~\cite{angelidakis2017algorithms,awasthi2012center,balcan2016clustering}

SVM is discuss in covered in almost every introductory machine
learning textbook, for example \cite{HundredPage}.
Optimization methods for learning problems, including variations of gradient descent,
are discussed in \cite{sra2012optimization}. A overview of online 
convex optimization, which we use in our results, can be found in \cite{OCO,Hazan16}.

\section{Hardness of Gradient Approximation}
    \label{sec:gradient-hard}

\begin{lemma}
\label{lemma:NP-hardSVM}
It $\#P$ hard to  $O(1)$-approximate the
partial derivative of the SVM objective function in a specified 
dimension. 
\end{lemma}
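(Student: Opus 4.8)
The plan is to reduce from a \#P-complete counting problem---most naturally \#SubsetSum (equivalently \#Knapsack): given integers $a_1,\dots,a_n$ and a target $t$, count the subsets $S\subseteq[n]$ with $\sum_{j\in S}a_j=t$. I would encode this as an acyclic join by taking the Cartesian product of $n$ two-row tables, where table $T_j$ has a single column holding the two values $0$ and $a_j$; since these tables share no attributes the join is trivially acyclic, and its rows are in bijection with subsets $S\subseteq[n]$, the $j$-th coordinate of a row being $a_j$ if $j\in S$ and $0$ otherwise (all values rescaled into $[-1,1]$ at the end). For a weight vector $\beta$ with equal coordinates, $\beta\cdot x_i$ is a fixed multiple of the subset sum $\sum_{j\in S}a_j$, so the active set $\mathcal L=\{i: 1-y_i\beta\cdot x_i>0\}$ picks out exactly the subsets whose sum lies below a threshold that I control through $\beta$ and the labels.

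The crux is to make the partial derivative isolate the exact-sum count by using the subtraction structure itself. I would introduce two groups of points: one group carrying label $y=+1$ that becomes active precisely when the subset sum is $\le t$, and a negated copy carrying label $y=-1$ that becomes active when the sum is $\le t-1$. Choosing a dimension $k$ in which the relevant feature is a fixed positive constant, the positive part $G_k^+$ then counts the subsets with sum $\le t$ and the negative part $G_k^-$ subtracts off those with sum $\le t-1$, so that $G_k=G_k^++G_k^-$ equals $\tfrac{1}{N}\,Z$ up to a known scalar, where $Z=|\{S:\sum_{j\in S}a_j=t\}|$ is exactly the \#SubsetSum answer. Thus computing the partial derivative exactly outputs a \#P-complete quantity, and I would spell out the gadget so that everything outside the window $\{$sum$=t\}$ cancels while $G_k^+$ and $G_k^-$ individually remain large---precisely the subtraction problem in action.

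For the approximation claim I would then observe that the construction forces $G_k\in\{0\}\cup[\tfrac{1}{N},\infty)$: if no subset sums to $t$ then $G_k=0$, and otherwise $G_k\ge 1/N$. Any finite multiplicative---in particular any $O(1)$---approximation must return $0$ on a true value of $0$ and a strictly positive number on a true value bounded away from $0$, so an $O(1)$-approximation algorithm would decide whether $Z=0$. Since the exact partial derivative equals $Z/N$ up to a known scalar and $Z$ is a \#P-complete count, this is the sense in which even an $O(1)$-approximation is \#P-hard: a polynomial-time approximation would in particular decide the NP-complete support of this \#P-complete function. The hardness genuinely lives here, because multiplicative approximation of a quantity that can vanish is no easier than detecting its support.

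The step I expect to be the main obstacle is engineering the cancellation gadget so that (i) the two half-space counts ``sum $\le t$'' and ``sum $\le t-1$'' are realized simultaneously as the $G_k^+$ and $G_k^-$ parts of a single partial derivative at one hypothesis $\beta$, (ii) the join stays acyclic and all coordinates stay in $[-1,1]$ after rescaling, so the hinge threshold $1$ must be matched to the rescaled sums by a careful choice of $\beta$, and (iii) nothing outside the target window contributes net force in dimension $k$. Getting the ``$\le t$'' versus ``$\le t-1$'' boundary exactly right is delicate because subset sums are integers while the hinge boundary is a strict inequality, so I would separate the two thresholds by a small margin and verify that no subset sum can fall in the forbidden gap. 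Once the gadget is correct, the reduction and the zero/nonzero gap argument are routine.
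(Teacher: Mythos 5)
Your reduction has the right flavor (a Cartesian-product acyclic join whose rows enumerate subsets, with the hinge threshold tuned so that the active set picks out subsets by their sum), but the final step proves the wrong hardness. Your cancellation gadget makes the partial derivative equal, up to a known scalar, to $Z/N$ where $Z\ge 0$ is the number of subsets summing exactly to $t$; a multiplicative $O(1)$-approximation of a nonnegative quantity then only reveals whether $Z=0$ or $Z>0$. As you yourself note, that decides the \emph{support} of the counting function, i.e., the decision version of SubsetSum, which is NP-complete --- not $\#P$-hard. A constant-factor approximation of $Z$ is far from determining $Z$, so the oracle you have built does not let you recover the count, and the lemma claims $\#P$-hardness. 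The argument as written therefore falls short of the stated conclusion.

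The paper closes this gap by planting an adversarial threshold rather than isolating the exact count. It adds one special always-active row whose coordinate in the target dimension is $-k$, so that the partial derivative equals $C_L(w_1,\dots,w_m)-k$, where $C_L$ is the number of subsets fitting in a knapsack of size $L$. Any $O(1)$-multiplicative approximation preserves the sign of this quantity and hence answers ``is $C_L\ge k$?''; querying this for varying $k$ and binary searching recovers $C_L$ exactly, which is what makes even $O(1)$-approximation genuinely $\#P$-hard under Turing reductions. You could repair your proof the same way: drop the second (negatively labeled) copy of the subset gadget, and instead append a single always-active row carrying $-k$ in the target column (kept acyclic via a shared key attribute, as the paper does with its $(\text{Key},\text{Value})$ table). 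Everything else in your construction --- the two-row tables, the rescaling of the weights by $L$ so points lie in $[-1,1]$, and the integrality margin at the hinge boundary --- then goes through essentially as you describe.
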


\begin{proof}
We reduce the decision version of the counting knapsack problem to the problem of approximating the gradient of SVM. The input to the decision counting knapsack problem is a set of weights $W=\{w_1, w_2, \dots, w_m\}$, a knapsack size $L$, and an integer $k$. The output of the problem is whether there are $k$ different combinations of the items that fit into the knapsack.

We create $m+1$ tables, each with two columns. The columns of the first $m$ table are $(\text{Key}, E_i)$ for $T_i$ and the rows are $$T_i = \{(1,0) , (1,w_i/L), (0,0)\}.$$ The last table has two columns $(\text{Key}, \text{Value})$, and it has two rows $(1,1),(0,-k)$. Note that if we take the join of these tables, there will be $m+2$ columns: $(\text{Key}, \text{Value}, E_1, E_2, \dots, E_m)$. 

Let $\beta=(0,0,1,1,\dots,1)$ and $\lambda=0$, so $\beta$ is $0$ on the columns Key and Value and $1$ everywhere else. Then we claim, if the gradient of $F$ on the second dimension (Value) is non-negative, then the answer to the original counting knapsack is true, otherwise, it is false.

To see the reason, consider the rows in $J$: there are $2^m$ rows in the design matrix that have $(1,1)$ in the first two dimensions and all possible combinations of the knapsack items in the other dimensions. More precisely, the concatenation of $(1, 1)$ and $w_S$ for every $S \in [m]$ where $w_S$ is the vector that has 
$w_i / L$ in the $i$-th entry if item $i$ is in $S$ or 0 otherwise. Further, $J$ has a single special row with  values $(0,-k,0,0, \dots, 0)$. 
Letting $G_2$ be the gradient of SVM on the second dimension (column Value), we have,
\begin{align*}
    G_2 = \sum_{x \in J: 1-\beta x \geq 0} x_2
\end{align*}

For the row with $\text{Key}=1$ for each $S \in [m]$, we have $1-\beta x = 1 - \sum_{i \in S} w_i / L \geq 0$ if and only if the items in $S$ fits into the knapsack and $x_2 = 1$. For the single row with $\text{Key}=0$, we have $1 - \beta x = 1$, and its value on the second dimension is $x_2 = k$. Therefore, 
\begin{align*}
    G_2 = C_L(w_1,\dots,w_m) - k
\end{align*}
where $C_L$ is the number of subsets of items fitting into the knapsack of size $L$. This means if we could approximate the gradient up to any constant factor, we would be able to determine if $G_2$ is positive or negative, and as a result we would be able to answer the (decision version of) counting knapsack problem, which is $\#P$-hard.
\end{proof}

\section{Algorithm Design}
\label{sect:pseudogradient}

\subsection{Review of Row Counting with a Single Additive Constraint}

We now summarize algorithmic results from \cite{FAQ-AI(1)} for two different problems,
that we will use as a black box.

In the first problem the
input is a collection $T_1, \ldots, T_m$ of
tables, a label $\ell \in \{-1, +1\}$, and an additive inequality
$\mathcal{L}$ of the form $ \sum_{ j \in [d]} g_j(x_{j}) \geq R$, where each function
$g_j$ can be computed in constant time. 
The output consists of, for each $j \in [d]$ and $e \in D(j)$,
where $D(j)$ is the domain of column/feature $j$,
the number $C_{j,v}^\ell$
 of
rows in the design matrix 
$J = T_1 \Join \ldots \Join T_m$
that satisfy constraint $\mathcal{L}$, that have
label $\ell$, and
that have value $v$  in
column $j$. \cite{FAQ-AI(1)} gives
a relational algorithm, which we will
call the Row Counting Algorithm,
that computes a $(1+\epsilon)$-approximation for each such
$\widehat C^\ell_{j, v}$ to each $C^\ell_{j, v}$, 
and 
that runs in time $O(\frac{m}{\epsilon^2}(m^3 \log^2(n))^2  (d^2 m n^{h} \log(n)) )$ 

In the second problem the
input is a collection $T_1, \ldots, T_m$ of
tables, a label $\ell \in \{-1, +1\}$,
and an expression in the form of  $\sum_{ j \in [d]} g_j(x_{j})$, where the $g_j$ functions can be
computed in constant time. 
The output
consists of, for each $k \in [0, \log_{1+\epsilon} N ]$, maximum value of $H_k$
such that the number of points 
in the design matrix 
$J = T_1 \Join \ldots \Join T_m$
with label $\ell \in \{-1,1\}$ satisfying the additive
inequality $ \sum_{ j \in [d]} g_j(x_{j}) \geq H_k$ is at least $\lfloor (1+\epsilon)^k \rfloor$.  
\cite{FAQ-AI(1)} gives an algorithm
for this problem, which we will call the
Generalized Row Counting Algorithm,
that runs in time $O(\frac{1}{\epsilon^2}(m^3 \log^2(n))^2  (d^2 m n^{h} \log(n)))$. 
Using the result of the algorithm, for any scalar distance $H$,  it is possible to obtain a row count $\hat{N}(H)$ such that $N(H)/(1+\epsilon) \leq \hat{N}(H) \leq N(H)$, where $N(H)$ is the number of points in the design matrix with label $\ell$ satisfying the inequality $ \sum_{ j \in [d]} g_j(x_{j}) \geq H_k$.

\subsection{Overview of Our Approach}

Recall from the introduction that the difficulty arises
when a $\widehat G_k^+$ is 
approximately equal to $-\widehat G_k^-$. 
In this case, it would seem that by appropriately perturbing one of $L_1^-$ or $L_1^+$ 
by a relatively small amount one could force 
$ G = \widehat G^- + \widehat G^+$ for this perturbed instance.
In which case, if we used $2 \lambda \beta^{(t)} + (\widehat G^- +\widehat G^+)$
as the pseudo-gradient, then it would be the true gradient
for a slightly perturbed instance. 
However, this isn't quite right, as there is an additional issue. If we perturb a point $x_i$, then the sign of $1- y_i \beta x_i$ may change, which means 
this point's contribution to the gradient may discontinuously switch between $0$ and $- y_i x_i$.  To address
this issue, when computing the pseudo-gradient, we use a new instance $X'$ that excludes points that are ``close" to the separating hyperplane $1- y_i \beta x_i = 0$. That is, $X'$ 
excludes every point that can change sides of the hyperplane in  an $\epsilon$-perturbation of each coordinate. This will allow us
to formally conclude that if we used $2 \lambda \beta^{(t)} + (\widehat G^- + \widehat G^+)$,
where $\widehat G^-$ and $\widehat G^+$ 
are defined on $X'$, 
as the pseudo-gradient, then it would be the true gradient
for a slightly perturbed instance.
After the last descent step, 
we choose the final hypothesis to be 
the $\epsilon$-perturbation
of any computed hypothesis $\beta^{(t)}$, $t \in [0, T]$
that minimizes the SVM objective. 

 In the analysis we interpret the sequence $\beta^{(0)}, \beta^{(1)}, \ldots, \beta^{(T)}$ as  solving an online convex optimization problem, and apply known techniques
 from this area.


\subsection{Pseudo-gradient Descent Algorithm}
\label{sec:algorithm_design}

Firstly, in linear time it is straight-forward
to determine if the points in $X$ lie in
$[-1, 1]$, and if not, to rescale so
that they do; This can be accomplished by,
for each feature, dividing all the values of that feature in all of the input tables by maximum absolute value of  that feature. 
The initial hypothesis $\beta^{(0)}$ is the origin. For any vector $v$, let $u = \abs{v}$ be a vector such that its entries are the absolute values of $v$, meaning for all $j$ $u_j = \abs{v_j}$.

\noindent\textbf{Algorithm to Compute the Pseudo-gradient:}
\begin{enumerate}
\item
Run the Row Counting Algorithm to compute,
for each $j \in [d]$ and $v \in D(j)$,
a $(1+\epsilon)$ approximation 
$\widehat C^-_{j, v}$ to $C^-_{j,v}$, which is the number of 
rows in $x\in J$ with negative label,
satisfying $1+\beta^{(t)} \cdot x \geq \epsilon \abs{\beta^{(t)}} \cdot \abs{x}$. 

\item
Run the Row Counting Algorithm to compute,
for each $j \in [d]$ and $v \in D(j)$,
a $(1+\epsilon)$ approximation 
$\widehat C^+_{j, v}$ to $C^+_{j,v}$, which is the number of 
rows in $x\in J$ with positive label,
satisfying $1-\beta^{(t)} \cdot x \geq \epsilon \abs{\beta^{(t)}} \cdot \abs{x}$. 

    \item  For all $k \in [d]$, compute $\widehat G^-_k =  \sum_{v \in D(k) : v < 0} v \; \widehat C_{k,v}^- - \sum_{v \in D(k) : v \geq 0} v \; \widehat C_{k,v}^+$ .

    \item  For all $k \in [d]$, compute $\widehat G^+_k =  \sum_{v \in D(k) : v \geq 0} v \; \widehat C_{k,v}^- - \sum_{v \in D(k) : v < 0} v \; \widehat C_{k,v}^+$.

    \item \label{step:pseudo_gradient} The pseudo-gradient is then
    \begin{align*}
        \widehat G = \frac{\widehat G^- + \widehat G^+}{N} + \lambda \beta^{(t)}
    \end{align*}
\end{enumerate}

\noindent\textbf{Algorithm for a Single Descent Step:}
The next hypothesis $\beta^{(t+1)}$ is 
    \begin{align*}
        \beta^{(t+1)} = \Pi_{\mathcal{K}}(\beta^{(t)} - \eta_{t+1} \widehat G)
    \end{align*} 
  Here $\eta_t = \frac{1}{\lambda \sqrt{dt}}$ and $\Pi_\mathcal{K}(\beta)$ is the projection of $\beta$ onto a hypersphere $\mathcal{K}$ centered at the origin with radius $\frac{\sqrt{d}}{2\lambda}$.
    Note that $\Pi_{\mathcal{K}}(\beta)$ is $\beta$ if $\norm{\beta}_2 \leq \frac{\sqrt{d}}{2\lambda}$ and $\frac{\sqrt{d}}{2 \lambda \norm{\beta}_2} \beta$ otherwise.

\noindent\textbf{Algorithm to Compute 
the Final Hypothesis:} After $T-1$ descent steps, 
the algorithm calls the Generalized Row Counting twice 
for each $t \in [0, T-1]$, with the following inputs:
\begin{itemize}
    \item $\ell = 1$ and additive expression $1-\beta^{(t)} \cdot x_i - \epsilon |\beta^{(t)}| \cdot |x_i| $
    \item $\ell = -1$ and additive expression $1+\beta^{(t)} \cdot x_i - \epsilon |\beta^{(t)}| \cdot |x_i| $
\end{itemize}
Note that both of these expressions are equivalent to $1-y_i\beta^{(t)} \cdot x_i - \epsilon |\beta^{(t)}| \cdot |x_i|$. Let the array $H^+$ be the output for the first call and $H^-$ be the output for the second call. Note that $H^+$ and $H^-$ are monotonically decreasing by the the definition of the Generalized Row Counting algorithm. Let $L^+$ be the largest $k$ such that $H^+_k \geq 0$ and $L^-$ be the largest $k$ such that $H^-_k \geq 0$.
The algorithm then returns 
as its final hypothesis $\widehat \beta$, the hypothesis $\beta^{(\widehat t)}$ where $\widehat t$ is defined by:
\begin{align}
     \widehat t = \argmin_{t \in [T]} \widehat F(\beta^{(t)}, X)\end{align}
     where
\begin{equation}
\begin{aligned}
\label{eqn:fhat}
    \widehat F(\beta^{(t)}, X) 
    = &\frac{1}{N}\left(\sum_{k=0}^{L^+-1} (1+\epsilon)^k (H^+_{k} - H^+_{k+1}) + (1+\epsilon)^{L^+} H^+_{L^+}\right)
    \\
    +&\frac{1}{N} \left(\sum_{k=0}^{L^--1} (1+\epsilon)^k (H^-_{k} - H^-_{k+1}) + (1+\epsilon)^{L^-} H^-_{L^-}\right)
    + \lambda \norm{\beta^{(t)}}^2_2  
\end{aligned}
\end{equation}
Note that the values $L^-$, $L^+$, $H^+$ and $H^-$ 
in the definition of $\widehat F$, in equation (\ref{eqn:fhat}), all 
depend upon $t$, which we suppressed to make the notation
somewhat less ugly.

\section{Algorithm Analysis}
\label{sect:analysis}

In subsection \ref{subsec:perbanalysis} we prove
Theorem \ref{thm:svm} which bounds the convergence
of our project pseudo-gradient descent algorithm
in a rather nonstandard way by applying known
results on online convex optimization~\cite{OCO,Hazan16}.  
In subsection \ref{subsect:stability} we
introduce our definition of stability and
then prove Theorem \ref{thm:main}.

\subsection{Perturbation Analysis}
\label{subsec:perbanalysis}

Before stating Theorem \ref{thm:svm} we need some definitions.

\begin{definition}~
\label{defn:perturbation}
\begin{itemize}
    \item 
A point $p$ is an $\epsilon$-perturbation of point $q$ if every component of $p$ is within $(1 + \epsilon)$ factor of the corresponding component of $q$. Meaning in each dimension $j$ we have $(1-\epsilon)q \leq p \leq (1+\epsilon) q$
\item
A point set $X_a$ is  an $\epsilon$-perturbation of a point set $X_b$ if there
is a bijection between $X_a$ and $X_b$ such that
every point in $X_a$ is an $\epsilon$-perturbation of its corresponding point in $X_b$.
\item
Let 
$\beta^* =\argmin_\beta F(\beta, X)$ to be the optimal solution
at $X$.
\item For any $\epsilon$-perturbation $X_a$ of $X$, define
$\beta^*_a =\argmin_\beta F(\beta, X_a)$ to be the optimal solution
at $X_a$.
\item
For a given hypothesis $\beta$, we call a point $x$ with label $y$ \emph{close} if there is some $\epsilon$-perturbation $x'$ of $x$ such that $1-y \beta x'  < 0$; otherwise it is called \emph{far}. In other words, a point $x$ with label $y$ is close if $1-y \beta \cdot x < \epsilon \abs{\beta} \cdot \abs{x}$
\end{itemize}
\end{definition}

\begin{theorem}
	\label{thm:svm} 
Assume our projected pseudo-gradient descent algorithm
ran for $T-1$ descent steps. 
Then for all hypotheses $\beta \in \mathbb{R}^d$ there
 exist $\epsilon$-perturbations $X_a$ and $X_b$ of 
 $X$ such that
\begin{align*}
    F(\widehat \beta,X_a) \leq (1+\epsilon) F(\beta,X_b) + \frac{ 32d^{3/2}}{\lambda \sqrt{T}}
\end{align*}
\end{theorem}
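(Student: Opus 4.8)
The plan is to recast the run of the algorithm as an instance of online projected gradient descent and to invoke a standard regret bound. For each descent step $t$ I would first exhibit an $\epsilon$-perturbation $X^{(t)}$ of $X$ for which the pseudo-gradient $\widehat G$ computed at $\beta^{(t)}$ is \emph{exactly} the true SVM gradient $\nabla F(\beta^{(t)}, X^{(t)})$. Granting this, the sequence $\beta^{(0)}, \ldots, \beta^{(T-1)}$ is precisely the trajectory of online projected gradient descent against the convex loss functions $f_t(\beta) = F(\beta, X^{(t)})$, each convex because $X^{(t)}$ is a genuine point set. The online gradient descent regret guarantee (see \cite{OCO,Hazan16}) then bounds $\frac{1}{T}\sum_t f_t(\beta^{(t)}) - \frac{1}{T}\sum_t f_t(\beta)$ for every fixed $\beta$, and the remaining work is to connect the two averages to $F(\widehat\beta, X_a)$ and $F(\beta, X_b)$.

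The crux, and the step I expect to be the main obstacle, is producing the perturbation $X^{(t)}$. Two features of the pseudo-gradient make this possible. First, the counts $\widehat C^{\pm}_{k,v}$ range only over \emph{far} points, i.e.\ those with $1 - y_i\beta^{(t)} x_i \geq \epsilon \abs{\beta^{(t)}}\cdot\abs{x_i}$; since any $\epsilon$-perturbation changes $y_i\beta^{(t)} x_i$ by at most $\epsilon\abs{\beta^{(t)}}\cdot\abs{x_i}$, every far point stays active under the perturbation while every close point can simultaneously be pushed inactive by its own adversarial $\epsilon$-perturbation, so the active set of $X^{(t)}$ is exactly the far set. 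Second, I would realize the $(1+\epsilon)$-approximate counts as coordinate scalings: for each label, dimension $k$, and value $v$, replace the common coordinate $v$ by $v' = v\,\widehat C_{k,v}/C_{k,v}$, which lies within a $(1+\epsilon)$ factor of $v$ and is therefore a legal perturbation. Summing $y_i x'_{ik}$ over the far points reproduces $\widehat G^-_k + \widehat G^+_k$ term by term, so $\nabla F(\beta^{(t)}, X^{(t)})$ agrees with $\widehat G$ in the loss part; the regularizer is unaffected by perturbations and contributes the same $\lambda\beta^{(t)}$ term to both. Checking that the far-point scalings and the close-point pushes can be chosen consistently on a single point set is the delicate bookkeeping here.

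Next I would instantiate the regret bound with the algorithm's parameters. Since every coordinate lies in $[-1,1]$ and $\beta^{(t)} \in \mathcal K$, the ball of radius $\frac{\sqrt d}{2\lambda}$, the pseudo-gradient satisfies $\abs{\widehat G_k} \leq 2 + \lambda\abs{\beta^{(t)}_k} = O(\sqrt d)$ per coordinate and hence $\norm{\widehat G}_2 = O(d)$, while $\mathcal K$ has diameter $O(\sqrt d/\lambda)$. With the prescribed $\eta_t = \frac{1}{\lambda\sqrt{dt}}$ these are exactly the balanced step sizes for online projected gradient descent, and the standard bound yields $\textrm{Regret}(T) = O\!\left(\frac{d^{3/2}\sqrt T}{\lambda}\right)$, so $\textrm{Regret}(T)/T = O\!\left(\frac{d^{3/2}}{\lambda\sqrt T}\right)$; tracking the constants gives the $\frac{32 d^{3/2}}{\lambda\sqrt T}$ term.

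Finally I would identify $X_a$ and $X_b$. For the comparator, each $X^{(t)}$ is an $\epsilon$-perturbation of $X$, so $\frac{1}{T}\sum_t f_t(\beta) \leq \max_t F(\beta, X^{(t)}) =: F(\beta, X_b)$. For the played iterates, the additive expressions fed to the Generalized Row Counting algorithm are $1 - y_i\beta^{(t)} x_i - \epsilon\abs{\beta^{(t)}}\cdot\abs{x_i}$, so $\widehat F(\beta^{(t)}, X)$ is a $(1+\epsilon)$-approximation of $F(\beta^{(t)}, X^{(t)}_{\min})$, where $X^{(t)}_{\min}$ is the loss-minimizing $\epsilon$-perturbation for $\beta^{(t)}$. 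Taking $X_a := X^{(\widehat t)}_{\min}$ and using that $\widehat t$ minimizes $\widehat F$, together with $F(\beta^{(t)}, X^{(t)}_{\min}) \leq F(\beta^{(t)}, X^{(t)})$, gives $F(\widehat\beta, X_a) \leq (1+\epsilon)\,\widehat F(\beta^{(\widehat t)}, X) \leq (1+\epsilon)\frac{1}{T}\sum_t F(\beta^{(t)}, X^{(t)})$. Chaining this with the regret inequality and absorbing the $(1+\epsilon)$ factors into the constant of the additive error term produces $F(\widehat\beta, X_a) \leq (1+\epsilon) F(\beta, X_b) + \frac{32 d^{3/2}}{\lambda\sqrt T}$, as claimed.
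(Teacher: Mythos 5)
Your proposal follows essentially the same route as the paper's proof: the existence of a perturbation $X^{(t)}$ on which the pseudo-gradient is the exact gradient (far points stay active, close points are pushed inactive, approximate counts realized as $(1\pm\epsilon)$ coordinate scalings) is the paper's Lemma \ref{prop:g-of-something}, the regret bound is Theorem \ref{thm:oco}, and your choices $X_b = \argmax_t F(\beta, X^{(t)})$ and $X_a = Z^{(\widehat t)}$ together with the $(1+\epsilon)$-approximation of $\widehat F$ are exactly Lemmas \ref{lemma:hatnohat}, \ref{lemma:Xb}, and \ref{lemma:Xa}. The argument is correct and matches the paper's in structure and in the accounting of constants.
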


To prove Theorem \ref{thm:svm}, our main tool is a
result from the online convex optimization literature~\cite{OCO,Hazan16}.

\begin{theorem}\cite{OCO,Hazan16} 
	\label{thm:oco}
	Let $g_1, g_2, ..., g_T: \mathbb{R}^n \rightarrow \mathbb{R}$ be $G$-Lipschitz functions over
	a convex region $\mathcal{K}$, i.e., $|| \nabla g_t(\beta) || \leq G$ for all $\beta \in \mathcal{K}$ and all $t$. Then, starting at point $\beta^{(0)} \in \mathbb{R}^n$ and using the update rule of $\beta^{(t)} \leftarrow \Pi_\mathcal{K} \left( \beta^{(t-1)} - \eta_t \nabla g_{t-1}( \beta^{(t-1)})\right)$, with $\eta = \frac{D}{G \sqrt t}$ for $T-1$ steps, we have
\begin{align}	
	\label{eqn:oco}
	\frac{1}{T} \sum_{t = 0}^{T-1} g_t( \beta^{(t)}) \leq \frac{1}{T} \sum_{t = 0}^{T-1} g_t( \beta^*) + \frac{2DG}{\sqrt {T}}
\end{align}
\noindent	 for all $\beta^*$ with $|| \beta^{(0)} - \beta^*|| \leq D$. 
\end{theorem}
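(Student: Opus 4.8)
The plan is to give the standard online projected (sub)gradient descent regret analysis (Zinkevich's argument), since Theorem~\ref{thm:oco} is exactly the statement that this algorithm incurs $O(DG\sqrt{T})$ regret against any fixed comparator $\beta^*$. I would organize everything around a single potential, the squared distance $\Phi_t = \norm{\beta^{(t)} - \beta^*}_2^2$ of the current iterate to $\beta^*$, and show that the per-round linearized regret is paid for by the drop in this potential plus a small step-size-dependent error term.

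First I would linearize using convexity. Writing $\nabla_t = \nabla g_t(\beta^{(t)})$, convexity of $g_t$ gives $g_t(\beta^{(t)}) - g_t(\beta^*) \leq \langle \nabla_t, \beta^{(t)} - \beta^* \rangle$, so it suffices to bound $\sum_t \langle \nabla_t, \beta^{(t)} - \beta^*\rangle$. (If the $g_t$ are merely convex and not differentiable, as for the hinge loss, one replaces $\nabla_t$ by any subgradient; the $G$-Lipschitz hypothesis still bounds its norm by $G$.) Next I would use that projection onto the convex set $\mathcal{K}$ is nonexpansive: since $\beta^* \in \mathcal{K}$,
\begin{align*}
\Phi_{t+1} &= \norm{\Pi_{\mathcal{K}}(\beta^{(t)} - \eta_{t+1}\nabla_t) - \beta^*}_2^2 \leq \norm{\beta^{(t)} - \eta_{t+1}\nabla_t - \beta^*}_2^2 \\
&= \Phi_t - 2\eta_{t+1}\langle \nabla_t, \beta^{(t)} - \beta^*\rangle + \eta_{t+1}^2\norm{\nabla_t}_2^2 .
\end{align*}
Rearranging isolates the per-round term $\langle \nabla_t, \beta^{(t)} - \beta^*\rangle \leq \frac{\Phi_t - \Phi_{t+1}}{2\eta_{t+1}} + \frac{\eta_{t+1}}{2}\norm{\nabla_t}_2^2$, and the Lipschitz bound $\norm{\nabla_t}_2 \leq G$ controls the last term.

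Finally I would sum over the rounds and handle the two resulting sums separately. The error sum is $\frac{1}{2}\sum_t \eta_{t+1}G^2$, which with $\eta_t = \frac{D}{G\sqrt{t}}$ becomes $\frac{DG}{2}\sum_t \frac{1}{\sqrt{t}} \leq DG\sqrt{T}$ via $\sum_{t=1}^T t^{-1/2} \leq 2\sqrt{T}$. The telescoping sum $\sum_t \frac{\Phi_t - \Phi_{t+1}}{2\eta_{t+1}}$ is the only place that needs care: because the step sizes vary I would use Abel summation rather than a clean telescope, leaving boundary terms together with $\sum_t \Phi_t\bigl(\frac{1}{2\eta_{t+1}} - \frac{1}{2\eta_t}\bigr)$. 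Here lies the main obstacle. Those coefficients are nonnegative since $\eta_t$ is decreasing, so to collapse the sum to $\frac{D^2}{2\eta_T} \leq \frac{DG\sqrt{T}}{2}$ I need a uniform bound $\Phi_t \leq D^2$ for \emph{every} $t$, not merely for $t=0$. This is legitimate because every iterate after the first is a projection onto $\mathcal{K}$ and hence lies in $\mathcal{K}$, so $\Phi_t$ is controlled by the diameter of $\mathcal{K}$; I would therefore read $D$ as bounding the relevant distances $\norm{\beta^{(t)} - \beta^*}_2$ (equivalently the diameter of $\mathcal{K}$), as in the standard OCO setup. Adding the two bounds gives total regret at most $\frac{3}{2}DG\sqrt{T} \leq 2DG\sqrt{T}$; dividing by $T$ and combining with the convexity step yields inequality (\ref{eqn:oco}). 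I would close by noting that the only nonstandard bookkeeping is the paper's indexing convention (the regret sum runs from $t=0$ to $T-1$, with $\nabla g_{t-1}(\beta^{(t-1)})$ driving the update to $\beta^{(t)}$), which amounts to a harmless shift of indices in the telescope.
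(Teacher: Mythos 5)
Your proposal is correct, and it is essentially \emph{the} proof of this statement: the paper itself does not prove Theorem~\ref{thm:oco} at all --- it imports it by citation from \cite{OCO,Hazan16} --- and what you have written is precisely the standard Zinkevich-style regret analysis (convexity linearization, nonexpansiveness of $\Pi_{\mathcal{K}}$, the potential $\Phi_t = \norm{\beta^{(t)} - \beta^*}_2^2$, Abel summation over the varying step sizes, and $\sum_{t=1}^{T} t^{-1/2} \le 2\sqrt{T}$) that appears in those references, with the same $\tfrac{3}{2}DG\sqrt{T} \le 2DG\sqrt{T}$ bookkeeping.

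One point you raise deserves emphasis, because it is a genuine observation about the statement rather than about your argument: as literally phrased, the theorem only assumes $\norm{\beta^{(0)} - \beta^*} \le D$, but with time-varying steps $\eta_t = \frac{D}{G\sqrt{t}}$ the Abel-summation coefficients $\frac{1}{2\eta_{t+1}} - \frac{1}{2\eta_t}$ are positive, so the proof needs $\Phi_t \le D^2$ for \emph{all} $t$; without that, $\Phi_t$ can grow (each step moves the iterate by up to $\eta_t G$) and the boundary terms blow up. Your fix --- reading $D$ as a bound on the diameter of $\mathcal{K}$, with $\beta^* \in \mathcal{K}$ and all iterates in $\mathcal{K}$ after projection --- is exactly how the cited sources state the theorem, and it is also consistent with how the paper applies it (there $\mathcal{K}$ is a hypersphere centered at $\beta^{(0)}$, so all relevant distances are bounded by a constant multiple of the radius). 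So your proof establishes the theorem in the form the paper actually uses; the paper's hypothesis is just stated loosely.
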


To apply this Theorem \ref{thm:oco}, we set $g_t =    F(\beta^{(t)},X^{(t)}, Y)$, where $X^{(t)}$ is 
an $\epsilon$-perturbation of $X$, such
that the pseudo-gradient at $X$ is equal to the 
true gradient at $X^{(t)}$. 
We establish the existence of
$X^{(t)}$ in Lemma \ref{prop:g-of-something}.
Thus our projected pseudo-gradient descent algorithm updates the hypothesis  exactly the same as stated in Theorem \ref{thm:svm} (assuming that we
use the same upper bounds on $D$ and $G$).
Then in definition \ref{defn:Z} we identify  the
$\epsilon$-permutation $Z$ that minimizes $F(\beta, Z)$,
and then in Lemma \ref{lemma:hatnohat} bound
the relative error between $\widehat F(\beta, X)$ 
and $F(\beta, Z)$. Finally this will allow use
in Lemma \ref{lemma:Xb} and Lemma \ref{lemma:Xa}
we show the existence of $X_b$ and $X_a$, respectively,
that will allow us to conclude the proof of Theorem \ref{thm:svm}.

\begin{lemma}
	\label{prop:g-of-something}
	In every descent step $t$, the computed pseudo-gradient
	$\widehat{G}$ is the exact gradient of $F(\beta^{(t)}, X^{(t)})$ for some point set $X^{(t)}$ that is an $\epsilon$-perturbation of $X$. 
\end{lemma}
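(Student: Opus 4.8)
The plan is to build the witness perturbation $X^{(t)}$ explicitly, one coordinate of one point at a time, and to argue that its active set at $\beta^{(t)}$ (the points with $1-y_i\beta^{(t)} x_i \ge 0$, which are exactly the ones contributing to the loss gradient) is \emph{frozen} to be the set $S$ of \emph{far} points, no matter how we ultimately perturb those far points. Once the active set is pinned down, reproducing $\widehat G$ reduces to matching, in each dimension $k$ separately, one scalar sum over $S$, and this can be done by a simple uniform rescaling of coordinate $k$.

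\textbf{Freezing the active set.} First I would record the elementary bound that for any $\epsilon$-perturbation $x'$ of $x$ we have $\abs{\beta^{(t)} \cdot x' - \beta^{(t)} \cdot x} \le \epsilon\,\abs{\beta^{(t)}}\cdot\abs{x}$, since each coordinate moves by at most $\epsilon\abs{\beta^{(t)}_j x_j}$. Two consequences follow. (i) If $x$ is far, i.e. $1-y\beta^{(t)}\cdot x \ge \epsilon\abs{\beta^{(t)}}\cdot\abs{x}$, then $1-y\beta^{(t)}\cdot x' \ge 1-y\beta^{(t)}\cdot x-\epsilon\abs{\beta^{(t)}}\cdot\abs{x}\ge 0$ for \emph{every} $\epsilon$-perturbation $x'$, so a far point stays active however we perturb it. (ii) If $x$ is close, then choosing each coordinate to move in the direction that increases $y\beta^{(t)}\cdot x'$ raises $y\beta^{(t)}\cdot x'$ by the full $\epsilon\abs{\beta^{(t)}}\cdot\abs{x}$, which by the definition of close forces $1-y\beta^{(t)}\cdot x'<0$; hence every close point can be perturbed to be inactive. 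Fixing such a perturbation on the close points gives $\mathcal{L}(X^{(t)})=S$ regardless of what we later do to the far points, and close points then contribute nothing to the gradient.

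\textbf{Matching the coordinate sums and reading off the gradient.} By the Row Counting guarantee $\widehat C^{\pm}_{k,v}$ lies within a $(1+\epsilon)$ factor of the true far-point count $C^{\pm}_{k,v}$, so $\widehat C^{\pm}_{k,v}/C^{\pm}_{k,v}\in[1-\epsilon,1+\epsilon]$ whenever $C^{\pm}_{k,v}>0$ (using $1/(1+\epsilon)\ge 1-\epsilon$). For a far point of label $-1$ (resp. $+1$) whose $k$-th coordinate equals $v$, I set its perturbed $k$-th coordinate to $v\cdot \widehat C^-_{k,v}/C^-_{k,v}$ (resp. $v\cdot \widehat C^+_{k,v}/C^+_{k,v}$); this is a legal $\epsilon$-perturbation of coordinate $k$, and since distinct coordinates are perturbed independently the resulting point is a global $\epsilon$-perturbation. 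Summing the rescaled $k$-th coordinates within each value class yields $\sum_v v\,\widehat C^-_{k,v}$ over the negative far points and $\sum_v v\,\widehat C^+_{k,v}$ over the positive ones. Because $\mathcal{L}(X^{(t)})=S$, the loss part of the $k$-th gradient component of $F(\beta^{(t)},X^{(t)})$ equals $-\frac1N\sum_{i\in S}y_i x^{(t)}_{ik}=\frac1N\big(\sum_v v\,\widehat C^-_{k,v}-\sum_v v\,\widehat C^+_{k,v}\big)$, which telescopes against the algorithm's $\widehat G^-_k$ and $\widehat G^+_k$ to exactly $\frac1N(\widehat G^-_k+\widehat G^+_k)$; adding the regularizer term, which does not depend on $X$, reproduces $\widehat G$.

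\textbf{Main obstacle and loose ends.} The crux is the decoupling in the middle step: I need the active set to be genuinely insensitive to the coordinate rescalings applied to the far points, which is precisely what consequence (i) provides, together with the fact that the fixed close-point perturbation of (ii) cannot re-activate anything. The remaining points are routine: value classes with $C^{\pm}_{k,v}=0$ or with $v=0$ are vacuous (empty or fixed at $0$); the approximation must be invoked in the two-sided form $\widehat C\in[(1-\epsilon)C,(1+\epsilon)C]$ so that the rescaling factor is a valid $\epsilon$-perturbation; and for a far point that lands exactly on the hinge kink one adopts the active-side subgradient, consistent with the ``$\ge$'' convention used to define the counts.
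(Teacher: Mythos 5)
Your proposal is correct and follows essentially the same route as the paper's proof: split points into far and close via the condition $1-y\beta^{(t)}\cdot x \ge \epsilon\abs{\beta^{(t)}}\cdot\abs{x}$, perturb the close points off the active set, and perturb the far points coordinate-wise so that the summed coordinates reproduce $\widehat G^-_k+\widehat G^+_k$. Your version is in fact somewhat more careful than the paper's, which only writes $\widehat C_{k,v}=(1\pm\epsilon)C_{k,v}$ and implicitly distributes that factor over the individual points, whereas you make the witness explicit via the rescaling $v\cdot\widehat C^{\pm}_{k,v}/C^{\pm}_{k,v}$ and check the degenerate cases.
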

\begin{proof}
To prove the claim, we show how to find a desired $X^{(t)}$ -- this is only for the sake of the proof and the algorithm doesn't need to know $X^{(t)}$. 
We call any point $x$ with label $y$ ``far'' if it satisfies the inequality 
\begin{align}
\label{equality:far_close}
    1-y \beta^{(t)} \cdot x \geq \epsilon  \abs{\beta^{(t)}} \cdot \abs{x}
\end{align}, otherwise we call the point ``close''. 
Note that for a far point there is no $\epsilon$-perturbation to make the derivative of the loss function $0$. 
That is, for any point $x$ with label $y$, if $1-y \beta \cdot x \geq \epsilon \sum_{j \in [d]} \abs{\beta_{j}}  \abs{x_j}$, then we have $1-  y \beta x' \geq 0$ for any $x'$ that is $\epsilon$-perturbation of $x$. To see this, note that we have $1 -  y \beta x' = 1 - \sum_{k = 1}^d  \beta_{k} x'_{k} \geq 1 - \sum_{k = 1}^d  \left(\beta_{k} x_{k} + \abs{\beta_{k}} \abs{x_{k}}\right) \geq 0$ because of $x'$ being $\epsilon$-perturbation of $x$. 
On the other hand, for all the close points there exists a perturbation $x'$ such that $1-y\beta^{(t)} \cdot x' < 0$. We first perturb all of the close points such that they don't have any effect on the gradient. 

Next, we need to show a perturbation of the far points for which the $\widehat{G}$ is the gradient of the loss function. Let $X^+_f$ and $X^-_f$ be the set of far points with positive and negative labels. Let $X_f = X^+_f \cup X^-_f$. We show the perturbation for each dimension $k$ separately. Based on definition of $\widehat{G}^+_k$ and $\widehat{G}^-_k$ we have:
\begin{align*}
    \widehat{G}^+_k + \widehat{G}^-_k
    &= \sum_{v \in D(k)} v \; \widehat C_{k,v}^- - \sum_{v \in D(k)} v \; \widehat C_{k,v}^+
    \\
    &= \sum_{v \in D(k)} v \; (1\pm \epsilon) C_{k,v}^- - \sum_{v \in D(k)} v \; (1 \pm \epsilon) C_{k,v}^+
\end{align*}
Note that $C^+_{k,v}$ is the number of points in $X^+_f$ with value $v$ in dimension $k$. Therefore,
\begin{align*}
    \widehat{G}^+_k + \widehat{G}^-_k 
    &= \sum_{v \in D(k)} v \; (1\pm \epsilon) C_{k,v}^- - \sum_{v \in D(k)} v \; (1 \pm \epsilon) C_{k,v}^+
    \\
    &= \sum_{x_i \in X^-_f} (1 \pm \epsilon) x_{i,k} - \sum_{x_i \in X^+_f} (1 \pm \epsilon) x_{i,k}
    \\
    &= - \sum_{x_i \in X_f} (1 \pm \epsilon) y_i x_{i,k} 
\end{align*}
where the last term is $N \frac{\partial L(\beta^{(t)},X^{(t)})}{\partial \beta^{(t)}_k}$ where $X^{(t)}$ an $\epsilon$-perturbation of $X$.
\end{proof}

\begin{definition}
\label{defn:Z}
Let $Z^{(t)}$ be an $\epsilon$-perturbation of $X$ such that for all $z_i \in Z^{(t)}$ and for all dimensions $k$
\begin{align*}
    z_{i,k} = \begin{cases}
    (1-\epsilon) x_{i,k} & y_i \beta^{(t)}_{k} \geq 0
    \\
    (1+\epsilon) x_{i,k} & y_i \beta^{(t)}_{k} < 0
    \end{cases}
\end{align*}
Note that this $\epsilon$-perturbation minimizes $F(\beta^{(t)}, Z^{(t)})$.

\end{definition}

\begin{lemma}
\label{lemma:hatnohat}
  $\frac{1}{1+\epsilon} F(\beta^{(t)}, Z^{(t)}) \leq\widehat{F}(\beta^{(t)}, X) \leq F(\beta^{(t)}, Z^{(t)})$.
  \end{lemma}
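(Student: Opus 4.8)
The plan is to strip the claim down to the hinge-loss term and then recognize $\widehat F$ as a quadrature reconstruction of that term. Since $\widehat F(\beta^{(t)},X)$ and $F(\beta^{(t)},Z^{(t)})$ contain the identical regularizer $\lambda\norm{\beta^{(t)}}_2^2$, it suffices to compare the two loss terms. First I would record the key consequence of Definition~\ref{defn:Z}: because $Z^{(t)}$ is the $\epsilon$-perturbation that maximizes each $y_i \beta^{(t)} z_i$, the loss of point $i$ at $Z^{(t)}$ equals $\max(0,\, 1 - y_i \beta^{(t)} x_i - \epsilon \abs{\beta^{(t)}}\cdot\abs{x_i})$. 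Writing $d_i := 1 - y_i \beta^{(t)} x_i - \epsilon\abs{\beta^{(t)}}\cdot\abs{x_i}$, the loss term of $F(\beta^{(t)},Z^{(t)})$ is then $\frac1N \sum_{i : d_i \ge 0} d_i$. These $d_i$ are exactly the additive expressions fed to the two Generalized Row Counting calls, one per label, so the positive-label part of the sum is tabulated by $H^+$ and the negative-label part by $H^-$.

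Next I would interpret (\ref{eqn:fhat}) as a layer-cake reconstruction. For a fixed label, let $N(H)$ be the number of points of that label with $d_i \ge H$, so that $\sum_{i:d_i\ge 0} d_i = \int_0^\infty N(H)\,dH$. By definition of the Generalized Row Counting, $H^+_k$ is the largest threshold at which the (approximate) surviving count reaches the geometric level $\lfloor (1+\epsilon)^k\rfloor$, and $L^+$ is the last level whose threshold is still nonnegative. Reading the bracket $\sum_{k=0}^{L^+-1}(1+\epsilon)^k (H^+_k - H^+_{k+1}) + (1+\epsilon)^{L^+}H^+_{L^+}$ as a union of rectangles of height $(1+\epsilon)^k$ over the value-interval $[H^+_{k+1},H^+_k]$ (plus a final rectangle over $[0,H^+_{L^+}]$), this is precisely the rectangle approximation of $\int_0^\infty N(H)\,dH$ in which the count axis is discretized at the powers of $(1+\epsilon)$.

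With this picture the two inequalities follow by accounting for the two sources of error separately and checking that each is one-sided. For the upper bound, every rectangle lies weakly below the true survival curve: its height $(1+\epsilon)^k$ satisfies $N(H^+_k) \ge \widehat N(H^+_k) \ge (1+\epsilon)^k$, and since $N$ is nonincreasing this count bounds $N(H)$ throughout $[H^+_{k+1},H^+_k]$; moreover $\widehat N(H) \le N(H)$ only shifts the thresholds downward. Hence the quadrature never exceeds $\int_0^\infty N(H)\,dH$, and summing the two labels and restoring the regularizer gives $\widehat F(\beta^{(t)},X) \le F(\beta^{(t)},Z^{(t)})$. For the lower bound I would show that inflating each rectangle height by a single factor $(1+\epsilon)$ makes the rectangles cover the survival curve, so that $F(\beta^{(t)},Z^{(t)}) \le (1+\epsilon)\,\widehat F(\beta^{(t)},X)$, using the geometric spacing of the count levels together with the $N(H)/(1+\epsilon)\le \widehat N(H)$ half of the Generalized Row Counting guarantee.

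The main obstacle is this last step: proving that the discretization of the count axis and the multiplicative count-approximation compose into a single $(1+\epsilon)$ factor rather than $(1+\epsilon)^2$. The delicate point is that both effects move the same thresholds $H^+_k$ in the same direction, so I must argue that one $(1+\epsilon)$ simultaneously absorbs the gap between consecutive geometric levels and the gap between $\widehat N$ and $N$, rather than charging them as independent losses. I expect this to hinge on comparing, level by level, the rectangle over $[H^+_{k+1},H^+_k]$ against the slab of the true integral lying between counts $(1+\epsilon)^k$ and $(1+\epsilon)^{k+1}$, and using monotonicity of $N(H)$ to charge that entire slab to a single inflated rectangle.
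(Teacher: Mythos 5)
Your proposal follows essentially the same route as the paper's proof: express $F(\beta^{(t)},Z^{(t)})$ as a layer-cake integral of the survival counts $N^{\pm}(\tau)$ and read $\widehat F$ as the rectangle quadrature with heights $(1+\epsilon)^k$ over the bands between consecutive thresholds $H_k$, with the two one-sided bounds coming from the true count in each band lying in $[\lfloor(1+\epsilon)^k\rfloor,\lfloor(1+\epsilon)^{k+1}\rfloor)$. The obstacle you flag dissolves because $\widehat F$ is built only from the thresholds $H_k$, which by the specification of the Generalized Row Counting output are defined with respect to the \emph{true} counts reaching the levels $\lfloor(1+\epsilon)^k\rfloor$; the multiplicative $\widehat N$ guarantee never enters the definition of $\widehat F$, so the single $(1+\epsilon)$ loss comes solely from the geometric spacing of the count levels, exactly as in the paper.
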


\begin{proof}
Consider a value $t$ and let $N^+(\tau) = \abs{\{ x_i \suchthat y_i = +1 \allowbreak \text{ and } \allowbreak 1- \allowbreak \beta^{(t)} \cdot x_i - \epsilon \abs{\beta^{(t)}} \cdot \abs{x_i} \geq \tau \}}$, and $N^-(\tau) = \abs{\{ x_i \suchthat y_i = -1 \allowbreak \text{ and } \allowbreak 1+ \allowbreak \beta^{(t)} \cdot x_i - \epsilon \abs{\beta^{(t)}} \cdot \abs{x_i} \geq \tau \}}$.

Before proving the lemma we prove the following claim: $F(\beta^{(t)}, Z^{(t)}) = \frac{1}{N}\int_{\tau = 0}^\infty N^+(\tau) \text{d} \tau + \frac{1}{N}\int_{\tau = 0}^\infty N^-(\tau) \text{d} \tau + \lambda \norm{\beta^{(t)}}^2$.
 
Note that based on the definition of $Z^{(t)}$ it is the case that $1-y_i \beta^{(t)} \cdot z_i = 1-y_i \beta^{(t)} \cdot x_i - \epsilon \abs{\beta^{(t)}} \cdot \abs{x_i}$; therefore, $N^+(\tau) = \abs{\{ y_i = +1 \in Z^{(t)} \text{ and } 1-y_i \beta^{(t)} \cdot z_i \geq \tau \}}$ and $N^-(\tau) = \abs{\{ y_i = -1 \in Z^{(t)} \text{ and } 1-y_i \beta^{(t)} \cdot z_i \geq \tau \}}$.  
Hence, 
\begin{align*}
    L(\beta^{(t)}, Z^{(t)}) 
    &= \frac{1}{N} \sum_i \max(0,1-y_i \beta \cdot z_i)
    = \frac{1}{N} \sum_{i: 1-y_i \beta \cdot z_i \geq 0} 1-y_i \beta \cdot z_i
    \\
    &= \frac{1}{N} \sum_{i: 1-y_i \beta \cdot z_i \geq 0} \int_{\tau = 0}^{1-y_i \beta \cdot z_i} \text{d} \tau
    = \frac{1}{N} \int_{\tau = 0}^{\infty}  \sum_{i: 1-y_i \beta \cdot z_i \geq \tau} \text{d} \tau 
    \\
    &= \frac{1}{N}\int_{\tau = 0}^\infty (N^+(\tau)+N^-(\tau)) \text{d} \tau
\end{align*}
Therefore, 
\begin{align}
    \label{eq:integral}
    F(\beta^{(t)}, Z^{(t)}) = \frac{1}{N}\int_{\tau = 0}^\infty N^+(\tau) \text{d} \tau + \frac{1}{N}\int_{\tau = 0}^\infty N^-(\tau) \text{d} \tau + \lambda \norm{\beta^{(t)}}^2
\end{align}
The number of points with label $\ell$ satisfying $1- \ell \beta^{(t)} \cdot  x_i - \epsilon \abs{\beta^{(t)}} \cdot \abs{x_i} \geq \tau$ for any $\tau \in [H^\ell_{k}, H^\ell_{k+1})$ is in the range $[\lfloor (1+\epsilon)^k \rfloor,\lfloor (1+\epsilon)^{(k+1)} \rfloor)$. Therefore, the claim follows by replacing $N^+(\tau)$ in Equation \eqref{eq:integral} with $(1+\epsilon)^k$ for all the values of $\tau \in [H^+_{k}, H^+_{k+1})$ and replacing $N^-(\tau)$ in $\eqref{eq:integral}$ with $(1+\epsilon)^k$ for all the values of $\tau \in [H^-_{k}, H^-_{k+1})$.

\end{proof}

\begin{lemma}
\label{lemma:Xb} 
	For all hypothesis $\beta$, there exists an $\epsilon$-perturbation
	$X_b$ of $X$ such that
\begin{align*}
	\min_{s} F(\beta^{(s)},Z^{(s)}) \leq F(\beta,X_b) + \frac{2DG}{\sqrt{T}}
\end{align*}
\end{lemma}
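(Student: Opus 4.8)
The plan is to interpret the trajectory $\beta^{(0)}, \ldots, \beta^{(T-1)}$ as an instance of online gradient descent and invoke Theorem~\ref{thm:oco}. For each step $t$, let $X^{(t)}$ be the $\epsilon$-perturbation of $X$ supplied by Lemma~\ref{prop:g-of-something}, and define the online cost function $g_t(\beta) = F(\beta, X^{(t)}, Y)$. By Lemma~\ref{prop:g-of-something} the pseudo-gradient $\widehat{G}$ used at step $t$ equals $\nabla g_t(\beta^{(t)})$, so the algorithm's update $\beta^{(t+1)} = \Pi_{\mathcal{K}}(\beta^{(t)} - \eta_{t+1}\widehat{G})$ is exactly the OCO update rule applied to the sequence $g_0, g_1, \ldots, g_{T-1}$.

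Before applying Theorem~\ref{thm:oco} I would check its hypotheses: that each $g_t$ is $G$-Lipschitz over $\mathcal{K}$ and that the step size $\eta_t = \frac{1}{\lambda\sqrt{dt}}$ coincides with $\frac{D}{G\sqrt{t}}$ for the chosen $D$ and $G$. Since $\nabla F(\beta, X^{(t)}) = 2\lambda\beta - \frac{1}{N}\sum_{i\in\mathcal{L}} y_i x^{(t)}_i$ and every perturbed point still lies in a bounded box while $\beta$ ranges over the radius-$\frac{\sqrt{d}}{2\lambda}$ ball $\mathcal{K}$, the gradient norm is bounded by some $G = O(\sqrt{d})$ uniformly in $t$; taking $D = \frac{\sqrt{d}}{2\lambda}$ makes $\frac{D}{G}$ match the algorithm's step size. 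Applying Theorem~\ref{thm:oco} with $\beta^* = \beta$ then yields $\frac{1}{T}\sum_{t=0}^{T-1} g_t(\beta^{(t)}) \le \frac{1}{T}\sum_{t=0}^{T-1} g_t(\beta) + \frac{2DG}{\sqrt{T}}$.

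It remains to translate both sides of this regret bound into the claimed inequality. For the left-hand side, recall that $Z^{(t)}$ is by Definition~\ref{defn:Z} the $\epsilon$-perturbation minimizing $F(\beta^{(t)}, \cdot)$, so $F(\beta^{(t)}, Z^{(t)}) \le F(\beta^{(t)}, X^{(t)}) = g_t(\beta^{(t)})$; since the minimum over the steps is bounded by the average, $\min_s F(\beta^{(s)}, Z^{(s)}) \le \frac{1}{T}\sum_{t=0}^{T-1} g_t(\beta^{(t)})$. For the right-hand side, I would pick $X_b = X^{(s^*)}$ where $s^* = \argmax_{t} F(\beta, X^{(t)})$; this $X_b$ is one of the $X^{(t)}$ and hence an $\epsilon$-perturbation of $X$, and $\frac{1}{T}\sum_{t=0}^{T-1} g_t(\beta) = \frac{1}{T}\sum_{t=0}^{T-1} F(\beta, X^{(t)}) \le F(\beta, X_b)$. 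Chaining the three inequalities gives $\min_s F(\beta^{(s)}, Z^{(s)}) \le F(\beta, X_b) + \frac{2DG}{\sqrt{T}}$, as desired.

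The main obstacle I anticipate is not the combinatorial min/average and max/average argument, which is routine, but verifying the quantitative hypotheses of Theorem~\ref{thm:oco}: establishing a \emph{uniform} Lipschitz constant $G$ for the whole family $\{g_t\}$ (the perturbed instances $X^{(t)}$ vary with $t$, and one must confirm the bound holds simultaneously for all of them) and ensuring the diameter bound $\norm{\beta^{(0)} - \beta} \le D$ is legitimate for the hypotheses $\beta$ the lemma quantifies over, so that the fixed algorithmic step size is exactly the OCO step size. Since $\beta^{(0)}$ is the origin, the center of $\mathcal{K}$, the choice $D = \frac{\sqrt{d}}{2\lambda}$ is valid whenever $\beta \in \mathcal{K}$; for $\beta \notin \mathcal{K}$ the regularizer forces $F(\beta, X_b) \ge \lambda\norm{\beta}^2$ to be large while the left-hand side stays bounded, so the inequality holds trivially and one reduces to the case $\beta \in \mathcal{K}$.
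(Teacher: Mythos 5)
Your proposal is correct and follows essentially the same route as the paper's proof: apply Theorem~\ref{thm:oco} to the functions $g_t(\beta) = F(\beta, X^{(t)})$ furnished by Lemma~\ref{prop:g-of-something}, bound the left side via $\min \le$ average and $F(\beta^{(t)}, Z^{(t)}) \le F(\beta^{(t)}, X^{(t)})$, and take $X_b = X^{(u)}$ with $u = \argmax_t F(\beta, X^{(t)})$ on the right. Your added care in checking the uniform Lipschitz bound and the diameter hypothesis (including the case $\beta \notin \mathcal{K}$) goes slightly beyond what the paper writes down, but does not change the argument.
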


\begin{proof} 
By Theorem \ref{thm:oco}
\begin{align}
\label{eqn:pert1}
    \frac{1}{T}\sum_{t=0}^{T-1} F(\beta^{(t)}, X^{(t)}) \leq \frac{1}{T}\sum_{t=0}^{T-1} F(\beta,X^{(t)}) + \frac{2DG}{\sqrt{T}}
\end{align} 
Then
\begin{align}
\label{eqn:pert2}
  \min_{s} F(\beta^{(s)},Z^{(s)}) \leq \frac{1}{T}\sum_{t=0}^{T-1} F(\beta^{(t)}, Z^{(t)}) \leq \frac{1}{T}\sum_{t=0}^{T-1} F(\beta^{(t)}, X^{(t)}).
\end{align}
The first inequality follows since the minimum is less
than the average, and the second inequality follows
from the definition of $Z^{(t)}$.
Let $u = \argmax_t F(\beta, X^{(t)})$, and $X_b = X^{(u)}$. 
Then 
\begin{align}
\label{eqn:pert3}
    \frac{1}{T}\sum_{t=0}^{T-1} F(\beta,X^{(t)}) \leq \max_t F(\beta,X^{(t)}) = F(\beta, X_b)
\end{align}
Thus, combining lines (\ref{eqn:pert1}), (\ref{eqn:pert2}) and (\ref{eqn:pert3}) we can conclude that:
\begin{align}
	\label{eqn:60}
	\min_{s} F(\beta^{(s)},Z^{(s)}) \leq F(\beta,X_b) + \frac{2DG}{\sqrt{T}}
\end{align}
\end{proof}

\begin{lemma}
\label{lemma:Xa} 
	There exists an $\epsilon$-perturbation
	$X_a$ of $X$ such that
	\begin{align*}
	F(\widehat \beta, X_a) \le (1+\epsilon) \min_{s} F(\beta^{(s)},Z^{(s)}) 
		\end{align*}
\end{lemma}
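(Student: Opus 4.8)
The plan is to exhibit the perturbation $X_a$ explicitly by taking $X_a = Z^{(\widehat t)}$, the optimal perturbation from Definition \ref{defn:Z} evaluated at the index $\widehat t$ that the algorithm selected. This is the natural choice: the returned hypothesis is $\widehat\beta = \beta^{(\widehat t)}$, and $Z^{(\widehat t)}$ is by construction the $\epsilon$-perturbation of $X$ that minimizes $F(\beta^{(\widehat t)}, \cdot)$. Since $Z^{(\widehat t)}$ is a legitimate $\epsilon$-perturbation of $X$ by Definition \ref{defn:Z}, it is an admissible witness, and we immediately have $F(\widehat\beta, X_a) = F(\beta^{(\widehat t)}, Z^{(\widehat t)})$.

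The heart of the argument is then a short chaining of inequalities that uses the two-sided sandwich of Lemma \ref{lemma:hatnohat} as a bridge between the true objective $F(\beta^{(t)}, Z^{(t)})$ and the computable surrogate $\widehat F(\beta^{(t)}, X)$ on which $\widehat t$ is actually chosen. Let $s^\ast$ denote an index achieving $\min_s F(\beta^{(s)}, Z^{(s)})$. First I would apply the lower bound of Lemma \ref{lemma:hatnohat} at $t = \widehat t$ to pass from $F(\beta^{(\widehat t)}, Z^{(\widehat t)})$ up to $(1+\epsilon)\,\widehat F(\beta^{(\widehat t)}, X)$. Next, since $\widehat t = \argmin_{t} \widehat F(\beta^{(t)}, X)$, we have $\widehat F(\beta^{(\widehat t)}, X) \le \widehat F(\beta^{(s^\ast)}, X)$. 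Finally, the upper bound of Lemma \ref{lemma:hatnohat} at $t = s^\ast$ converts $\widehat F(\beta^{(s^\ast)}, X)$ back into $F(\beta^{(s^\ast)}, Z^{(s^\ast)})$. Concatenating these three steps gives
\begin{align*}
F(\widehat\beta, X_a) = F(\beta^{(\widehat t)}, Z^{(\widehat t)}) \le (1+\epsilon)\,\widehat F(\beta^{(\widehat t)}, X) \le (1+\epsilon)\,\widehat F(\beta^{(s^\ast)}, X) \le (1+\epsilon)\,F(\beta^{(s^\ast)}, Z^{(s^\ast)}),
\end{align*}
and the last quantity equals $(1+\epsilon)\min_s F(\beta^{(s)}, Z^{(s)})$, which is exactly the claimed bound.

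I do not expect any heavy computation here; the proof is essentially a bookkeeping argument. The only conceptual point worth stressing is \emph{why} a surrogate-based selection rule suffices: the algorithm never has access to $F$ or to the true minimizing perturbations $Z^{(t)}$, and it can only minimize the computable $\widehat F(\beta^{(t)}, X)$. The two-sided approximation guarantee of Lemma \ref{lemma:hatnohat} is precisely what certifies that minimizing the surrogate is comparable to minimizing the true objective over perturbations, at the cost of the single multiplicative $(1+\epsilon)$ factor. Accordingly, the main thing to get right is the direction of each of the three inequalities and the choice of which endpoint of the sandwich to invoke at $\widehat t$ versus at $s^\ast$; there is no delicate estimate to push through.
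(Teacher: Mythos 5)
Your proposal is correct and follows essentially the same route as the paper: set $X_a = Z^{(\widehat t)}$, apply the lower bound of Lemma \ref{lemma:hatnohat} at $\widehat t$, use the definition of $\widehat t$ as the minimizer of $\widehat F$, and close with the upper bound of Lemma \ref{lemma:hatnohat} at the true minimizer. The only cosmetic difference is that the paper phrases the middle step as an equality with $\min_s \widehat F(\beta^{(s)}, X)$ rather than an inequality against a fixed $s^\ast$; the content is identical.
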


\begin{proof}
Let $X_a = Z^{(\widehat t)}$ where 
\begin{align*}
   F(\widehat \beta , X_a)  & \le (1+\epsilon) \widehat F(\widehat \beta , X) & \text{By Lemma \ref{lemma:hatnohat}}\\  
   &= (1+\epsilon) \min_s \widehat F( \beta^{(s)}, X) & \text{By definition of } \widehat \beta \\
   &\le  (1+\epsilon) \min_s F( \beta^{(s)}, Z^{(s)}) & \text{By Lemma \ref{lemma:hatnohat}}
\end{align*}

\end{proof}

\subsection{Stability Analysis}
\label{subsect:stability}

Our formal definition of stability, which
we give in Definition \ref{defn:stability}
while not unnatural, is surely not the first
natural formalization that one would think of. 
Our formal definition  
is more or less forced on us, which leads to the type
of non-traditional approximation achieved in Theorem \ref{thm:svm}.

\begin{definition}~
\label{defn:stability}
An SVM instance $X$   is $(\alpha, \delta, \gamma)$-stable for $\delta \leq 1$ if for all
 $X_a$ and $X_b$ that are $\alpha$-perturbations of $X$ it is the case that:
\begin{itemize}
    \item 
 $\beta^*_a $ is a $(1+\delta)$ approximation to the optimal
objective value at $X_b$, that is, 
$F(\beta^*_a, X_b) \le (1+\delta) \min_\beta F(\beta, X_b)$. 
    \item 
If 
$\beta_a$ is  $(1+2 \delta)$ approximation to the optimal SVM objective value at $X_a$   then $\beta_a$ is a $(1+\gamma)$ approximation to the optimal SVM objective value at $X_b$. 
That is if $F(\beta_a, X_a) \le (1+2 \delta) \min_\beta F(\beta, X_a)$
then
$F(\beta_a, X_b) \le (1+ \gamma) \min_\beta F(\beta, X_b)$
\end{itemize}
\end{definition}

\begin{proof}[Proof of Theorem \ref{thm:main}]  
 Let $\epsilon \leq \min(\delta/8,\alpha)$.
    \begin{align*}
 F(\widehat \beta,X_a)  
    &\leq (1+\epsilon) F(\beta^*_a,X_b) + \frac{ 32d^{3/2}}{\lambda \sqrt{T}}&   X_a \text { and } X_b \text{ come from Theorem } \ref{thm:svm}  \\
    &=    (1+\epsilon) (1+\delta) F(\beta^*_a, X_a)  + \frac{32d^{3/2} }{\lambda \sqrt{T}} &   \text{By definition of stability}  \\
     &=    (1+\epsilon) (1+\delta) F(\beta^*_a,X_a)  + \frac{\delta}{8} F(\widehat \beta, X_a) &   \text{By definition of } T \\
      &\le   \frac{(1+\delta)(1+\epsilon)}{1-\delta/8} F(\beta^*_a,X_a)   &  \text{By algebra}  \\
      & \le (1+2 \delta)F(\beta^*_a,X_a) & \text{ by definition of } \epsilon
\end{align*}
Finally since $\widehat \beta$ is $(1+ 2 \delta)$ approximate  solution at $X_a$, by the definition of stability, 
$\widehat \beta$ is a $(1+\gamma) $ approximate solution at $X$.
\end{proof}

\bibliography{arxive}

\begin{thebibliography}{10}

\bibitem{KaggleSurvey}
Kaggle machine learning and data science survey.
\newblock \url{https://www.kaggle.com/ kaggle/kaggle-survey-2018}, 2018.

\bibitem{faqai}
Mahmoud Abo~Khamis, Ryan~R Curtin, Benjamin Moseley, Hung~Q Ngo, XuanLong
  Nguyen, Dan Olteanu, and Maximilian Schleich.
\newblock On functional aggregate queries with additive inequalities.
\newblock In {\em Proceedings of the 38th ACM SIGMOD-SIGACT-SIGAI Symposium on
  Principles of Database Systems}, pages 414--431. ACM, 2019.

\bibitem{khamis2018ac}
Mahmoud Abo~Khamis, Hung~Q Ngo, XuanLong Nguyen, Dan Olteanu, and Maximilian
  Schleich.
\newblock Ac/dc: in-database learning thunderstruck.
\newblock In {\em Proceedings of the Second Workshop on Data Management for
  End-To-End Machine Learning}, page~8. ACM, 2018.

\bibitem{IndatabaseLinearRegression}
Mahmoud Abo~Khamis, Hung~Q. Ngo, XuanLong Nguyen, Dan Olteanu, and Maximilian
  Schleich.
\newblock In-database learning with sparse tensors.
\newblock In {\em Proceedings of the 37th ACM SIGMOD-SIGACT-SIGAI Symposium on
  Principles of Database Systems}, SIGMOD/PODS '18, pages 325--340, New York,
  NY, USA, 2018. ACM.

\bibitem{DBLP:journals/corr/abs-1712-07445}
Mahmoud Abo~Khamis, Hung~Q. Ngo, Dan Olteanu, and Dan Suciu.
\newblock Boolean tensor decomposition for conjunctive queries with negation.
\newblock In {\em ICDT}, pages 21:1--21:19, 2019.

\bibitem{faq}
Mahmoud Abo~Khamis, Hung~Q. Ngo, and Atri Rudra.
\newblock Faq: Questions asked frequently.
\newblock In {\em Proceedings of the 35th ACM SIGMOD-SIGACT-SIGAI Symposium on
  Principles of Database Systems}, PODS '16, pages 13--28, New York, NY, USA,
  2016. ACM.
\newblock URL: \url{http://doi.acm.org/10.1145/2902251.2902280}, \href
  {https://doi.org/10.1145/2902251.2902280}
  {\path{doi:10.1145/2902251.2902280}}.

\bibitem{ackerman2009clusterability}
Margareta Ackerman and Shai Ben-David.
\newblock Clusterability: A theoretical study.
\newblock In {\em Artificial intelligence and statistics}, pages 1--8, 2009.

\bibitem{adler:dissertation}
Isolde Adler.
\newblock {\em Width functions for hypertree decompositions}.
\newblock 2006.
\newblock Ph.D. Dissertation, Albert-Ludwigs-Universit\"at Freiburg. 2006.

\bibitem{Aji:2006}
S.~M. Aji and R.~J. McEliece.
\newblock The generalized distributive law.
\newblock {\em IEEE Trans. Inf. Theor.}, 46(2):325--343, 2006.

\bibitem{angelidakis2017algorithms}
Haris Angelidakis, Konstantin Makarychev, and Yury Makarychev.
\newblock Algorithms for stable and perturbation-resilient problems.
\newblock In {\em Proceedings of the 49th Annual ACM SIGACT Symposium on Theory
  of Computing}, pages 438--451, 2017.

\bibitem{AGMBound}
Albert Atserias, Martin Grohe, and D{\'a}niel Marx.
\newblock Size bounds and query plans for relational joins.
\newblock In {\em 2008 49th Annual IEEE Symposium on Foundations of Computer
  Science}, pages 739--748. IEEE, 2008.

\bibitem{awasthi2012center}
Pranjal Awasthi, Avrim Blum, and Or~Sheffet.
\newblock Center-based clustering under perturbation stability.
\newblock {\em Information Processing Letters}, 112(1-2):49--54, 2012.

\bibitem{balcan2013clustering}
Maria-Florina Balcan, Avrim Blum, and Anupam Gupta.
\newblock Clustering under approximation stability.
\newblock {\em Journal of the ACM (JACM)}, 60(2):1--34, 2013.

\bibitem{balcan2016clustering}
Maria~Florina Balcan and Yingyu Liang.
\newblock Clustering under perturbation resilience.
\newblock {\em SIAM Journal on Computing}, 45(1):102--155, 2016.

\bibitem{OCO}
Nikhil Bansal and Anupam Gupta.
\newblock Potential-function proofs for first-order methods.
\newblock {\em CoRR}, abs/1712.04581, 2017.
\newblock URL: \url{http://arxiv.org/abs/1712.04581}, \href
  {http://arxiv.org/abs/1712.04581} {\path{arXiv:1712.04581}}.

\bibitem{bi2005support}
Jinbo Bi and Tong Zhang.
\newblock Support vector classification with input data uncertainty.
\newblock In {\em Advances in neural information processing systems}, pages
  161--168, 2005.

\bibitem{bilu2012stable}
Yonatan Bilu and Nathan Linial.
\newblock Are stable instances easy?
\newblock {\em Combinatorics, Probability and Computing}, 21(5):643--660, 2012.

\bibitem{HundredPage}
Andriy Burkov.
\newblock {\em The hundred-page machine learning book}.
\newblock Andriy Burkov Quebec City, Can., 2019.

\bibitem{Rkmeans}
Ryan~R. Curtin, Benjamin Moseley, Hung~Q. Ngo, XuanLong Nguyen, Dan Olteanu,
  and Maximilian Schleich.
\newblock Rk-means: Fast clustering for relational data.
\newblock {\em CoRR}, abs/1910.04939, 2019.
\newblock URL: \url{http://arxiv.org/abs/1910.04939}, \href
  {http://arxiv.org/abs/1910.04939} {\path{arXiv:1910.04939}}.

\bibitem{daniely2012clustering}
Amit Daniely, Nati Linial, and Michael Saks.
\newblock Clustering is difficult only when it does not matter.
\newblock {\em arXiv preprint arXiv:1205.4891}, 2012.

\bibitem{Dechter:1996}
Rina Dechter.
\newblock Bucket elimination: A unifying framework for probabilistic inference.
\newblock In {\em Proceedings of the Twelfth International Conference on
  Uncertainty in Artificial Intelligence}, 1996.

\bibitem{grohe2006structure}
Martin Grohe.
\newblock The structure of tractable constraint satisfaction problems.
\newblock In {\em International Symposium on Mathematical Foundations of
  Computer Science}, pages 58--72. Springer, 2006.

\bibitem{GM06}
Martin Grohe and D{\'a}niel Marx.
\newblock Constraint solving via fractional edge covers.
\newblock In {\em SODA}, pages 289--298, 2006.

\bibitem{Hazan16}
Elad Hazan.
\newblock Introduction to online convex optimization.
\newblock {\em Foundations and Trends in Optimization}, 2(3-4):157--325, 2016.
\newblock \href {https://doi.org/10.1561/2400000013}
  {\path{doi:10.1561/2400000013}}.

\bibitem{FAQ-AI(1)}
Mahmoud~Abo Khamis, Sungjin Im, Benjamin Moseley, Kirk Pruhs, and Alireza
  Samadian.
\newblock Approximate aggregate queries under additive inequalities.
\newblock {\em CoRR}, abs/2003.10588, 2020.
\newblock \href {http://arxiv.org/abs/2003.10588} {\path{arXiv:2003.10588}}.

\bibitem{Klug:1988:CQC:42267.42273}
Anthony Klug.
\newblock On conjunctive queries containing inequalities.
\newblock {\em J. ACM}, 35(1):146--160, January 1988.
\newblock URL: \url{http://doi.acm.org/10.1145/42267.42273}, \href
  {https://doi.org/10.1145/42267.42273} {\path{doi:10.1145/42267.42273}}.

\bibitem{Kohlas:2008}
J.~Kohlas and N.~Wilson.
\newblock Semiring induced valuation algebras: Exact and approximate local
  computation algorithms.
\newblock {\em Artif. Intell.}, 172(11):1360--1399, 2008.

\bibitem{Koutris2017}
Paraschos Koutris, Tova Milo, Sudeepa Roy, and Dan Suciu.
\newblock Answering conjunctive queries with inequalities.
\newblock {\em Theory of Computing Systems}, 61(1):2--30, Jul 2017.

\bibitem{kumar2010clustering}
Amit Kumar and Ravindran Kannan.
\newblock Clustering with spectral norm and the k-means algorithm.
\newblock In {\em 2010 IEEE 51st Annual Symposium on Foundations of Computer
  Science}, pages 299--308. IEEE, 2010.

\bibitem{Kumar:2015:LGL:2723372.2723713}
Arun Kumar, Jeffrey Naughton, and Jignesh~M. Patel.
\newblock Learning generalized linear models over normalized data.
\newblock In {\em Proceedings of the 2015 ACM SIGMOD International Conference
  on Management of Data}, SIGMOD '15, pages 1969--1984. ACM, 2015.

\bibitem{Kumar:2016:JJT:2882903.2882952}
Arun Kumar, Jeffrey Naughton, Jignesh~M. Patel, and Xiaojin Zhu.
\newblock To join or not to join?: Thinking twice about joins before feature
  selection.
\newblock In {\em Proceedings of the 2016 International Conference on
  Management of Data}, SIGMOD '16, pages 19--34, New York, NY, USA, 2016. ACM.
\newblock URL: \url{http://doi.acm.org/10.1145/2882903.2882952}, \href
  {https://doi.org/10.1145/2882903.2882952}
  {\path{doi:10.1145/2882903.2882952}}.

\bibitem{makarychev2014bilu}
Konstantin Makarychev, Yury Makarychev, and Aravindan Vijayaraghavan.
\newblock Bilu--linial stable instances of max cut and minimum multiway cut.
\newblock In {\em Proceedings of the twenty-fifth annual ACM-SIAM symposium on
  Discrete algorithms}, pages 890--906. SIAM, 2014.

\bibitem{marx2013tractable}
D{\'a}niel Marx.
\newblock Tractable hypergraph properties for constraint satisfaction and
  conjunctive queries.
\newblock {\em Journal of the ACM (JACM)}, 60(6):42, 2013.

\bibitem{Ngo2012wcoj}
Hung~Q. Ngo, Ely Porat, Christopher R{\'e}, and Atri Rudra.
\newblock Worst-case optimal join algorithms: [extended abstract].
\newblock In {\em Proceedings of the 31st ACM SIGMOD-SIGACT-SIGAI Symposium on
  Principles of Database Systems}, PODS '12, pages 37--48, New York, NY, USA,
  2012. ACM.
\newblock URL: \url{http://doi.acm.org/10.1145/2213556.2213565}, \href
  {https://doi.org/10.1145/2213556.2213565}
  {\path{doi:10.1145/2213556.2213565}}.

\bibitem{skew}
Hung~Q. Ngo, Christopher R{\'e}, and Atri Rudra.
\newblock Skew strikes back: New developments in the theory of join algorithms.
\newblock In {\em SIGMOD Rec.}, pages 5--16, 2013.

\bibitem{ostrovsky2013effectiveness}
Rafail Ostrovsky, Yuval Rabani, Leonard~J Schulman, and Chaitanya Swamy.
\newblock The effectiveness of lloyd-type methods for the k-means problem.
\newblock {\em Journal of the ACM (JACM)}, 59(6):1--22, 2013.

\bibitem{Roughgarden19}
Tim Roughgarden.
\newblock Beyond worst-case analysis.
\newblock {\em Commununications of the {ACM}}, 62(3):88--96, 2019.

\bibitem{SystemF}
Maximilian Schleich, Dan Olteanu, and Radu Ciucanu.
\newblock Learning linear regression models over factorized joins.
\newblock In {\em Proceedings of the 2016 International Conference on
  Management of Data}, SIGMOD '16, pages 3--18. ACM, 2016.

\bibitem{sra2012optimization}
Suvrit Sra, Sebastian Nowozin, and Stephen~J Wright.
\newblock {\em Optimization for machine learning}.
\newblock Mit Press, 2012.

\bibitem{LFTJ}
Todd~L. Veldhuizen.
\newblock Triejoin: {A} simple, worst-case optimal join algorithm.
\newblock In {\em ICDT}, pages 96--106, 2014.

\end{thebibliography}

\newpage

\appendix

\section{Analysis of Gradient Descent for SVM} 
\label{sect:gradientdescentanalysis}

Theorem \ref{thm:gradient} and Corollary \ref{corollary:gradient} give bounds on the
 number of iterations on  projected gradient 
 descent to reach solutions with bounded absolute error
 and bounded relative error, respectively.

\begin{theorem}\cite{OCO,Hazan16}
\label{thm:gradient}
Let $\mathcal{K}$ be a convex body and $F$ be a function such that 
$\norm{\nabla F(\beta)}_2 \le G$ for $\beta \in \mathcal{K}$. 
Let $\beta^* = \argmin_{\beta \in \mathcal{K}}  F(\beta)$ be the optimal solution.
Let $D$ be an upper bound on  $\norm{\beta^{(0)} - \beta^*}_2$,
the 2-norm distance from the initial candidate solution to the optimal solution.
Let $\widehat{\beta}_s = \frac{1}{s}\sum_{t=0}^{s-1} \beta^{(t)}$. 
Let $\eta_t = \frac{D}{G\sqrt{t}}$. Then
after $T-1$ iterations of projected gradient descent, it must be the case that
\begin{align*}
    F(\widehat{\beta}_{T})-F(\beta^*) \leq \frac{2DG}{\sqrt{T}}
\end{align*}
\end{theorem}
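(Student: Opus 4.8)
The statement is the classical regret/convergence bound for projected subgradient descent, so the plan is to run the standard potential-function argument and then average. First I would use convexity of $F$ to pass from function values to gradient inner products: for each step $t$, convexity gives
\[
F(\beta^{(t)}) - F(\beta^*) \le \langle \nabla F(\beta^{(t)}),\, \beta^{(t)} - \beta^* \rangle .
\]
This is the one place where convexity of $F$ is genuinely used, and it reduces the whole problem to controlling the sum of these inner products.

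Next I would bound each inner product by a change in the potential $\norm{\beta^{(t)} - \beta^*}_2^2$. Since $\beta^* \in \mathcal{K}$ and the projection $\Pi_{\mathcal{K}}$ is non-expansive, the update rule gives $\norm{\beta^{(t+1)} - \beta^*}_2^2 \le \norm{\beta^{(t)} - \eta_{t+1}\nabla F(\beta^{(t)}) - \beta^*}_2^2$. Expanding the right-hand side and rearranging yields
\[
\langle \nabla F(\beta^{(t)}),\, \beta^{(t)} - \beta^* \rangle \le \frac{\norm{\beta^{(t)} - \beta^*}_2^2 - \norm{\beta^{(t+1)} - \beta^*}_2^2}{2\eta_{t+1}} + \frac{\eta_{t+1}}{2}\norm{\nabla F(\beta^{(t)})}_2^2,
\]
and I would bound the last term by $\tfrac{1}{2}\eta_{t+1}G^2$ using the Lipschitz hypothesis $\norm{\nabla F}_2 \le G$.

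The main work is summing these inequalities over $t = 0, \dots, T-1$ with the decreasing step size $\eta_{t+1} = D/(G\sqrt{t+1})$. Because $\eta_t$ varies, the potential terms do not telescope directly; instead I would regroup them by $\norm{\beta^{(t)} - \beta^*}_2^2$ (summation by parts), use the coarse bound $\norm{\beta^{(t)} - \beta^*}_2 \le D$ together with the monotonicity of $1/\eta_t$ to collapse the resulting sum to $D^2/(2\eta_T)$, and bound the step-size term by $\tfrac{1}{2}G^2 \sum_{t=1}^{T}\eta_t$ using $\sum_{t=1}^{T} t^{-1/2} \le 2\sqrt{T}$. Plugging in $\eta_T = D/(G\sqrt{T})$, this gives $\sum_{t=0}^{T-1}\left(F(\beta^{(t)}) - F(\beta^*)\right) \le 2 D G \sqrt{T}$. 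This summation-by-parts bookkeeping with the time-varying step size is the only non-mechanical step, so I expect it to be the main (though modest) obstacle.

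Finally I would convert the averaged bound into a bound for the single point $\widehat\beta_T = \tfrac{1}{T}\sum_{t=0}^{T-1}\beta^{(t)}$. By convexity and Jensen's inequality, $F(\widehat\beta_T) \le \tfrac{1}{T}\sum_{t=0}^{T-1} F(\beta^{(t)})$, so dividing the regret bound by $T$ gives $F(\widehat\beta_T) - F(\beta^*) \le 2DG/\sqrt{T}$, which is the claim.
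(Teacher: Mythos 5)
The paper does not prove this theorem; it is stated with a citation to the online convex optimization literature (\cite{OCO,Hazan16}) and used as a black box. Your proof is the standard textbook argument for exactly this result --- convexity to reduce to inner products, non-expansiveness of $\Pi_{\mathcal{K}}$ plus expanding the square to get the per-step potential inequality, summation by parts for the time-varying step size, $\sum_{t=1}^{T} t^{-1/2} \le 2\sqrt{T}$, and Jensen for the averaged iterate --- and the bookkeeping checks out: the two sums contribute $\tfrac{1}{2}DG\sqrt{T}$ and $DG\sqrt{T}$ respectively, giving $\tfrac{3}{2}DG\sqrt{T} \le 2DG\sqrt{T}$ before dividing by $T$. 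One point worth flagging: your summation-by-parts step uses $\norm{\beta^{(t)} - \beta^*}_2 \le D$ for \emph{every} iterate $t$, whereas the theorem as stated in the paper only hypothesizes that $D$ bounds the \emph{initial} distance $\norm{\beta^{(0)} - \beta^*}_2$. With decreasing step sizes the potential terms do not telescope, so the initial-distance hypothesis alone does not suffice; the standard statement (e.g.\ in Hazan's text) takes $D$ to be the diameter of $\mathcal{K}$. This is a looseness in the paper's phrasing of the cited theorem rather than a flaw in your argument, but your proof establishes the theorem under the diameter hypothesis, not the literal hypothesis written here. (In the paper's application $\mathcal{K}$ is a ball of radius $D$ about the origin and $\beta^{(0)}$ is the origin, so the diameter is $2D$ and only constants are affected.)
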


\begin{corollary}
\label{corollary:gradient}
Adopting the assumptions from Theorem \ref{thm:gradient}, if $T \ge \left( \frac{4  DG}{\epsilon F(\widehat{\beta}_{T})} \right)^2$  then
\begin{align*}
    F(\widehat{\beta}_{T})\leq  (1+\epsilon) F(\beta^*)  
\end{align*}
That is, projected gradient descent achieves relative error $\epsilon$.
\end{corollary}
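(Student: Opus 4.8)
The plan is to derive Corollary \ref{corollary:gradient} as a direct algebraic consequence of the absolute-error bound in Theorem \ref{thm:gradient}, taking care with the self-referential term $F(\widehat{\beta}_T)$ since it appears in both the hypothesis on $T$ and the conclusion.

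First I would unpack the hypothesis $T \ge \left(\frac{4DG}{\epsilon F(\widehat{\beta}_T)}\right)^2$ by taking square roots, obtaining $\sqrt{T} \ge \frac{4DG}{\epsilon F(\widehat{\beta}_T)}$, and hence $\frac{2DG}{\sqrt{T}} \le \frac{\epsilon}{2} F(\widehat{\beta}_T)$. Substituting this into the guarantee $F(\widehat{\beta}_T) - F(\beta^*) \le \frac{2DG}{\sqrt{T}}$ supplied by Theorem \ref{thm:gradient} yields $F(\widehat{\beta}_T) - F(\beta^*) \le \frac{\epsilon}{2} F(\widehat{\beta}_T)$.

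Next I would rearrange to separate the two objective values: collecting the $F(\widehat{\beta}_T)$ terms gives $\left(1 - \frac{\epsilon}{2}\right) F(\widehat{\beta}_T) \le F(\beta^*)$, i.e.\ $F(\widehat{\beta}_T) \le \frac{1}{1-\epsilon/2} F(\beta^*)$. To close the gap to the claimed bound, I would verify $\frac{1}{1-\epsilon/2} \le 1+\epsilon$, which after clearing denominators is equivalent to $0 \le \frac{\epsilon}{2}(1-\epsilon)$ and therefore holds for all $0 \le \epsilon \le 1$. Chaining these gives $F(\widehat{\beta}_T) \le (1+\epsilon) F(\beta^*)$.

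The one point that needs attention is the self-reference: because $F(\widehat{\beta}_T)$ appears on both sides of $F(\widehat{\beta}_T) - F(\beta^*) \le \frac{\epsilon}{2} F(\widehat{\beta}_T)$, one cannot prematurely replace it by $F(\beta^*)$; the trick is to keep it on the right-hand side until the final rearrangement isolates it. The only assumption beyond Theorem \ref{thm:gradient} that the argument invokes is $\epsilon \le 1$, which the last inequality genuinely requires and which is standard for a relative-error guarantee; I would state it explicitly. No step presents a real obstacle, so this is essentially a short chain of inequalities.
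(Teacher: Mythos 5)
Your derivation is correct and is exactly the routine algebraic consequence of Theorem \ref{thm:gradient} that the paper intends (the paper states the corollary with no proof at all, so there is nothing to diverge from). The only implicit assumptions beyond the $\epsilon \le 1$ you rightly flag are $F(\widehat{\beta}_T) > 0$ (needed to rearrange the hypothesis on $T$) and $F(\beta^*) \ge 0$ (needed in the final chaining $\frac{1}{1-\epsilon/2}F(\beta^*) \le (1+\epsilon)F(\beta^*)$), both of which hold automatically for the nonnegative SVM objective.
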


The gradient of SVM objective $F$ is
\begin{align*}
\nabla F = 2 \lambda \beta - \frac{1}{N} y_i \sum_{i \in \mathcal{L}} x_i
\end{align*}
where $\mathcal{L}$ is the collection $\{ i \mid \beta x_i \le 1 \}$ of indices $i$ where $x_i$ is currently contributing to the objective. Note that in this hinge loss function, the gradient for the points on the hyperplane $1-\beta x=0$ does not exist, since the gradient is not continuous at this point. In our formulation we have used the sub-gradient for the points on $1-\beta x=0$, meaning for a $\beta$ on the hyperplane $1-\beta x=0$, we have used the limit of the gradient of the points that $1-\beta' x>0$ when $\beta'$ goes to $\beta$. For all the points that $1-\beta' x>0$, the gradient is $x$; therefore, the limit is also $x$.

Assume $\beta^{(0)}$ is the origin and adopt
the assumptions of Theorem \ref{thm:gradient}.
Then $\nabla F (\beta^*) = 0$ implies for any dimension $j$
\begin{align*}
  \abs{\beta^*_j} = \abs{\frac{1}{2 N  \lambda}\sum_{i \in \mathcal{L}} x_{ij}} \leq \frac{1}{  2 \lambda}
\end{align*} where the additional subscript of $j$ 
refers to dimension $j$.
And thus
\begin{align*}
    \norm{\beta^{(0)} - \beta^*}_2 \le \norm{\beta^*}_2 \le \sqrt{d} \max_{j \in [d]} \abs{\beta^*_j} \le \frac{\sqrt{d}}{2\lambda} 
\end{align*}
Thus let us define our convex body $\mathcal{K}$ to
be the hypersphere with radius $\frac{\sqrt{d}}{2\lambda}$
centered at the origin. Thus for $\beta \in \mathcal{K}$, 
\begin{align*}
\norm{\nabla F(\beta)}_2 &= \sqrt{\sum_{j \in [d]} \left( 2 \lambda \beta_j -     \frac{1}{N}\sum_{i \in \mathcal{L}} x_{ij} \right)^2} &\\
& \le \sqrt{\sum_{j \in [d]} 4 (\lambda \beta_j )^2    + 2 \left(\frac{1}{N} \sum_{i \in \mathcal{L}} x_{ij} \right)^2 } &\text{Since } (a-b)^2 \le 2a^2 + 2b^2    \\
&\le  2 \lambda \sqrt{\sum_{j \in [d]}   \beta_j^2 }     + \sqrt{2}  \frac{1}{N} \sum_{j \in [d]} \sum_{i \in \mathcal{L}} |x_{ij}| & \text{Since } \sqrt{\sum_{i} a_i^2} \le \sum_{i} |a_i| \\
&\le \sqrt{d} + \sqrt{2} d &  \\
&\le 4 d  &
\end{align*}


\begin{theorem}
\label{thm:gradientsvm}
Let the convex body $\mathcal{K}$
be the hypersphere with radius $\frac{\sqrt{d}}{2\lambda}$
centered at the origin. Let $F(\beta)$ be the SVM objective function.
Let $\beta^* = \argmin_\beta  F(\beta)$ be the optimal solution.
Let $\widehat{\beta}_s = \frac{1}{s}\sum_{t=0}^{s-1} \beta^{(t)}$. 
Let $\eta_t = \frac{1}{8 \lambda \sqrt{dt}}$. Then
after $T-1$ iterations of projected gradient descent, it must be the case that
\begin{align*}
    F(\widehat{\beta}_{T})-F(\beta^*) \leq \frac{ 4 d^{3/2} }{\lambda \sqrt{T}}
\end{align*}
\end{theorem}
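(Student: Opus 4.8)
The plan is to recognize Theorem \ref{thm:gradientsvm} as a direct specialization of the generic projected gradient descent guarantee in Theorem \ref{thm:gradient} to the SVM objective, with the two problem-dependent quantities $D$ and $G$ instantiated by the bounds derived in the running text immediately preceding the statement. Concretely, that computation already supplies an explicit diameter bound $D = \frac{\sqrt{d}}{2\lambda}$ on $\norm{\beta^{(0)} - \beta^*}_2$ (obtained from the first-order optimality condition $\nabla F(\beta^*) = 0$, which forces $\abs{\beta^*_j} \le \frac{1}{2\lambda}$ in each coordinate and hence $\norm{\beta^*}_2 \le \frac{\sqrt d}{2\lambda}$), together with an explicit Lipschitz bound $G = 4d$ on $\norm{\nabla F(\beta)}_2$ valid for every $\beta$ in the hypersphere $\mathcal{K}$. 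So the first step is simply to record these two values and note that the initial point $\beta^{(0)}$ is the origin, so that it lies in $\mathcal{K}$ and $D$ indeed bounds the distance to the optimum.

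The second step is to verify that the prescribed step size is exactly the one required by Theorem \ref{thm:gradient}. That theorem calls for $\eta_t = \frac{D}{G\sqrt{t}}$, and substituting $D = \frac{\sqrt{d}}{2\lambda}$ and $G = 4d$ gives $\frac{D}{G} = \frac{1}{8\lambda\sqrt{d}}$, so that $\frac{D}{G\sqrt{t}} = \frac{1}{8\lambda\sqrt{dt}}$, which is precisely the $\eta_t$ appearing in the statement. With the hypotheses of Theorem \ref{thm:gradient} thereby matched, its conclusion yields $F(\widehat\beta_T) - F(\beta^*) \le \frac{2DG}{\sqrt{T}}$ for the same averaged iterate $\widehat\beta_T = \frac{1}{T}\sum_{t=0}^{T-1}\beta^{(t)}$.

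The final step is the arithmetic $\frac{2DG}{\sqrt{T}} = \frac{2 \cdot \frac{\sqrt{d}}{2\lambda} \cdot 4d}{\sqrt{T}} = \frac{4d^{3/2}}{\lambda\sqrt{T}}$, which is exactly the claimed bound. I do not expect any genuine obstacle in the proof itself, since it is pure substitution; the only delicate content lies in the two bounds that precede the statement and that I am taking as given. The truly substantive one is the Lipschitz estimate, whose justification runs $\norm{\nabla F(\beta)}_2 \le 2\lambda\norm{\beta}_2 + \sqrt{2}\,\frac{1}{N}\sum_{j}\sum_{i\in\mathcal{L}}\abs{x_{ij}} \le \sqrt{d} + \sqrt{2}\,d \le 4d$, using $(a-b)^2 \le 2a^2 + 2b^2$, the fact that each $\abs{x_{ij}} \le 1$ (points lie in $[-1,1]^d$) so each coordinate sum is at most $N$, and $\norm{\beta}_2 \le \frac{\sqrt d}{2\lambda}$ for $\beta \in \mathcal{K}$. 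Since both $D$ and $G$ are already established, the theorem follows immediately by plugging them into Theorem \ref{thm:gradient}.
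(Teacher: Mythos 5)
Your proposal is correct and follows essentially the same route as the paper: the paper's own justification of Theorem \ref{thm:gradientsvm} is exactly the computation preceding its statement, which establishes $D = \frac{\sqrt{d}}{2\lambda}$ and $G = 4d$ and then invokes Theorem \ref{thm:gradient}, with the same step-size check $\eta_t = \frac{D}{G\sqrt{t}} = \frac{1}{8\lambda\sqrt{dt}}$ and the same arithmetic $\frac{2DG}{\sqrt{T}} = \frac{4d^{3/2}}{\lambda\sqrt{T}}$. No gaps.
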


Theorem \ref{corollary:gradientsvm} then 
follows by a straightforward application of 
Theorem \ref{thm:gradientsvm}.

\section{Background}
\label{app:background}

\subsection{Fractional edge cover number and output size bounds}
\label{app:agm}
In what follows, we consider a conjunctive query $Q$ over a relational database instance $I$.
We use $n$ to denote the size of the largest input relation in $Q$.
We also use $Q(I)$ to denote the output and $|Q(I)|$ to denote its size.
We use the query $Q$ and its hypergraph $\calH$ interchangeably.
\bdefn[Fractional edge cover number $\rho^*$]
Let $\calH=(\calV,\calE)$ be a hypergraph (of some query $Q$). Let $B\subseteq\calV$ be any subset
of vertices. 
A {\em fractional edge cover} of $B$ using edges in $\calH$ is a feasible
solution $\vec\lambda =(\lambda_S)_{S\in\calE}$ to the following linear
program:
\begin{eqnarray*}
	\min && \sum_{S\in\calE} \lambda_S\\
	\text{s.t.}&& \sum_{S : v \in S} \lambda_S \geq 1, \ \ \forall v \in B\\
	&& \lambda_S \geq 0, \ \ \forall S\in \calE.
\end{eqnarray*}
The optimal objective value of the above linear program is called
the {\em fractional edge cover number} of $B$ in $\calH$ and is denoted by $\rho^*_\calH(B)$.
When $\calH$ is clear from the context, we drop the subscript $\calH$ and use $\rho^*(B)$.

Given a conjunctive query $Q$, the fractional edge cover number of $Q$ is $\rho^*_\calH(\calV)$
where $\calH=(\calV,\calE)$ is the hypergraph of $Q$.
\edefn

\begin{theorem}[AGM-bound~\cite{AGMBound,GM06}]
Given a full conjunctive query $Q$ over a relational database instance $I$,
the output size is bounded by
\[|Q(I)| \leq n^{\rho^*},\]
where $\rho^*$ is the fractional edge cover number of $Q$.
\label{thm:agm-upperbound}
\end{theorem}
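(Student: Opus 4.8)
The plan is to prove the stronger multiplicative statement
\[
|Q(I)| \;\le\; \prod_{S \in \calE} |R_S|^{\lambda_S}
\]
for \emph{any} fractional edge cover $\vec\lambda = (\lambda_S)_{S \in \calE}$, and then specialize to an optimal one. Since every input relation satisfies $|R_S| \le n$ and $\sum_{S} \lambda_S = \rho^*$ at the optimum, this immediately yields $|Q(I)| \le n^{\sum_S \lambda_S} = n^{\rho^*}$. The tool I would use is an information-theoretic (entropy) argument, which I find cleaner than a direct counting argument.

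The entropy setup is as follows. Let $X = (X_v)_{v \in \calV}$ be a tuple drawn uniformly at random from the output $Q(I)$, so that $H(X) = \log |Q(I)|$ since $X$ is uniform on a set of size $|Q(I)|$. For each hyperedge $S \in \calE$, let $R_S$ be the input relation it indexes and let $X_S = (X_v)_{v \in S}$ be the projection of $X$ onto the variables of $S$. Because $Q$ is a full conjunctive query, projecting any output tuple onto $S$ produces a tuple of $R_S$; hence $X_S$ is supported on a subset of $R_S$, and so $H(X_S) \le \log |R_S|$.

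The crux of the argument is the \emph{fractional Shearer inequality}: for any random vector $X$ and any fractional cover $\vec\lambda$ (that is, $\sum_{S \ni v} \lambda_S \ge 1$ for every $v \in \calV$),
\[
H(X) \;\le\; \sum_{S \in \calE} \lambda_S\, H(X_S).
\]
Granting this, the theorem follows by chaining the two facts above:
\[
\log |Q(I)| = H(X) \le \sum_{S} \lambda_S H(X_S) \le \sum_{S} \lambda_S \log |R_S| \le \Big(\sum_{S} \lambda_S\Big)\log n .
\]
To prove the Shearer inequality I would fix a linear order on $\calV$, write $X_{<v}$ for the prefix of coordinates preceding $v$, and expand by the entropy chain rule: $H(X) = \sum_{v} H(X_v \mid X_{<v})$ and, for each fixed $S$, $H(X_S) = \sum_{v \in S} H(X_v \mid X_{S,<v})$, where $X_{S,<v}$ restricts to the coordinates of $S$ below $v$. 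Since conditioning on a larger set of variables only decreases entropy and $X_{S,<v}$ is a sub-tuple of $X_{<v}$, we get $H(X_v \mid X_{S,<v}) \ge H(X_v \mid X_{<v})$. Multiplying by $\lambda_S$, summing over $S$, and exchanging the order of summation, the coefficient of the nonnegative quantity $H(X_v \mid X_{<v})$ becomes $\sum_{S \ni v} \lambda_S \ge 1$, which gives $\sum_{S} \lambda_S H(X_S) \ge \sum_{v} H(X_v \mid X_{<v}) = H(X)$.

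The only genuinely delicate point is the Shearer inequality, and inside it the monotonicity step (``conditioning reduces entropy'') together with the bookkeeping that converts the per-edge chain-rule expansions into the global bound after swapping the double sum. Everything else is routine: the reduction to the multiplicative bound, the observation $H(X_S) \le \log |R_S|$ coming from the projection landing inside $R_S$, and the final one-line chaining. A purely combinatorial alternative (the Bollob\'as--Thomason / box-counting argument of the original source) avoids entropy but requires more intricate counting, so I would take the entropy route.
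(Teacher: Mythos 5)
Your argument is correct: the fractional Shearer inequality proved via the chain rule and monotonicity of conditional entropy, combined with $H(X_S)\le\log|R_S|$ for the uniform distribution on the output, is exactly the standard entropy proof of the AGM bound given in the sources the paper cites, and the paper itself states the theorem as background without reproving it. No gaps; the specialization from $|Q(I)|\le\prod_S|R_S|^{\lambda_S}$ to $n^{\rho^*}$ via $|R_S|\le n$ and $\sum_S\lambda_S=\rho^*$ is also handled correctly.
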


\begin{theorem}[AGM-bound is tight~\cite{AGMBound,GM06}]
Given a full conjunctive query $Q$ and a non-negative number $n$,
there exists a database instance $I$ whose relation sizes are upper-bounded by $n$ and satisfies
\[|Q(I)| =\Theta(n^{\rho^*}).\]
\label{thm:agm-lowerbound}
\end{theorem}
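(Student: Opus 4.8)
The plan is to prove tightness by exhibiting one explicit ``product'' database instance built from an optimal solution to the linear programming \emph{dual} of the fractional edge cover LP. First I would take the program defining $\rho^*$, namely $\min \sum_{S \in \calE} \lambda_S$ subject to $\sum_{S : v \in S} \lambda_S \geq 1$ for every $v \in \calV$ and $\lambda_S \geq 0$, and write down its dual $\max \sum_{v \in \calV} x_v$ subject to $\sum_{v \in S} x_v \leq 1$ for every edge $S \in \calE$ and $x_v \geq 0$. This dual is the fractional matching (fractional independent set) LP; both programs are feasible and bounded, so strong LP duality gives that their optima agree. Let $(x^*_v)_{v \in \calV}$ be an optimal dual solution, so that $\sum_{v \in \calV} x^*_v = \rho^*$.

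Next I would construct $I$ attribute by attribute. For each vertex (attribute) $v$, let its domain be $D_v = \{1, 2, \ldots, \lfloor n^{x^*_v} \rfloor\}$, and for each hyperedge $S \in \calE$ let the relation $R_S$ be the full Cartesian product $\prod_{v \in S} D_v$ of the domains of its attributes. Dual feasibility $\sum_{v \in S} x^*_v \leq 1$ immediately yields $|R_S| = \prod_{v \in S} \lfloor n^{x^*_v} \rfloor \leq \prod_{v \in S} n^{x^*_v} = n^{\sum_{v \in S} x^*_v} \leq n$, so every relation respects the size bound. Since each $R_S$ already contains \emph{every} tuple over its attribute set, any two relations automatically agree on all shared attributes, and hence the join $Q(I)$ is exactly the global Cartesian product $\prod_{v \in \calV} D_v$. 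Therefore $|Q(I)| = \prod_{v \in \calV} \lfloor n^{x^*_v} \rfloor$, which I would argue is $\Theta(n^{\rho^*})$, establishing the claim (and matching the AGM upper bound of Theorem \ref{thm:agm-upperbound}).

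The only genuine obstacle is the rounding bookkeeping in the last estimate. Using $\lfloor y \rfloor \geq y/2$ for $y \geq 1$ (valid here since $n^{x^*_v} \geq 1$ whenever $n \geq 1$ and $x^*_v \geq 0$), I would lower bound $\prod_{v \in \calV} \lfloor n^{x^*_v} \rfloor \geq 2^{-|\calV|} \prod_{v \in \calV} n^{x^*_v} = 2^{-|\calV|} n^{\rho^*}$, while the trivial upper bound $\prod_{v} \lfloor n^{x^*_v} \rfloor \leq n^{\rho^*}$ holds as well; since $|\calV|$ is a fixed parameter of the query, the two bounds match up to a constant and give $\Theta(n^{\rho^*})$. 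The remaining points are routine: vertices with $x^*_v = 0$ simply contribute a singleton domain $D_v$ of size $\lfloor 1 \rfloor = 1$, and the boundedness needed for strong duality is clear because the packing LP is feasible (take $x \equiv 0$) and its objective is capped by $|\calV|$. No structural analysis of $\calH$ beyond LP duality is needed, since the product construction decouples the hypergraph entirely into per-attribute domain sizes dictated by $(x^*_v)$.
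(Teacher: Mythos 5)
Your proposal is correct, and it is the standard dual-based construction from the cited references \cite{AGMBound,GM06}: the paper itself states this theorem as background without proof, and the canonical argument is exactly what you give — take an optimal solution of the dual fractional vertex-packing LP, build each relation as the full product of per-attribute domains of size $\lfloor n^{x^*_v}\rfloor$, and observe that the join is the global Cartesian product of size at least $2^{-|\calV|} n^{\rho^*}$. All the details you flag (strong duality via feasibility and boundedness, the feasibility check $|R_S| \le n$, and the floor bookkeeping with $|\calV|$ treated as a constant of the query) are handled correctly.
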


\emph{Worst-case optimal join algorithms}~\cite{LFTJ,Ngo2012wcoj,skew} can be used to answer any full conjunctive query $Q$
in time
\begin{equation}
O(|\calV|\cdot|\calE|\cdot n^{\rho^*}\cdot \log n).
\label{eqn:runtime:lftj}
\end{equation}

\subsection{Tree decompositions, acyclicity, and width parameters}
\label{app:td}
\bdefn[Tree decomposition]
\label{defn:TD}
Let $\calH = (\calV, \calE)$ be a hypergraph.
A {\em tree decomposition} of $\calH$ is a pair $(T, \chi)$
where $T = (V(T), E(T))$ is a tree and $\chi : V(T) \to 2^{\calV}$ assigns to
each node of the tree $T$ a subset of vertices of $\calH$.
The sets $\chi(t)$, $t\in V(T)$, are called the {\em bags} of the 
tree decomposition.  There are two properties the bags must satisfy
\bi
\item[(a)] For any hyperedge $F \in \calE$, there is a bag $\chi(t)$, $t\in
V(T)$, such that $F\subseteq \chi(t)$.
\item[(b)] For any vertex $v \in \calV$, the set 
$\{ t \suchthat t \in V(T), v \in \chi(t) \}$ is not empty and forms a 
connected subtree of $T$.
\ei
\edefn

\bdefn[acyclicity]\label{defn:alpha-acyclic-td}
A hypergraph $\calH = (\calV, \calE)$ is {\em acyclic} iff
there exists a tree decomposition 
$(T, \chi)$ in which every bag $\chi(t)$ is a hyperedge of $\calH$.
\edefn

When $\calH$ represents a join query, the tree $T$ in the above 
definition
is also called the {\em join tree} of the query. 
A query is acyclic if and only if its hypergraph is acyclic.

For non-acyclic queries, we often need a measure of how ``close'' a query is to being acyclic. To that end, we use \emph{width} notions of a query.

\bdefn[$g$-width of a hypergraph: a generic width notion~\cite{adler:dissertation}]
\label{defn:g-width}
Let $\calH=(\calV,\calE)$ be a hypergraph, and
$g : 2^\calV \to \mathbb R^+$ be a function that assigns a non-negative
real number to each subset of $\calV$.
The {\em $g$-width} of a tree decomposition $(T, \chi)$ of $\calH$ is 
$\max_{t\in V(T)} g(\chi(t))$.
The {\em $g$-width of $\calH$} is the {\em minimum} $g$-width
over all tree decompositions of $\calH$.
(Note that the $g$-width of a hypergraph is a {\em Minimax} function.)
\edefn

\bdefn[{\em Treewidth} and {\em fractional hypertree width} are special cases of {\em $g$-width}]
Let $s$ be the following function:
$s(B) = |B|-1$, $\forall V \subseteq \calV$.
Then the {\em treewidth} of a hypergraph $\calH$, denoted by
$\tw(\calH)$, is exactly its $s$-width, and
the {\em fractional hypertree width} of a hypergraph $\calH$,
denoted by $\fhtw(\calH)$, is the $\rho^*$-width of $\calH$.
\edefn

From the above definitions, $\fhtw(\calH)\geq 1$ for any hypergraph $\calH$.
Moreover, $\fhtw(\calH)=1$ if and only if $\calH$ is acyclic.

\subsection{Algebraic Structures}

In this section, we define some of the algebraic structures used in the paper.   First, we discuss the definition of a monoid.  A monoid is a semi-group with an identity element.   Formally, it is the following. 

\begin{definition}
Fix a set $S$ and let $\oplus$ be a binary operator $S \times S \rightarrow S$.  The set $S$ with $\oplus$ is a monoid if (1) the operator satisfies associativity; that is, $ (a \oplus b) \oplus c = a \oplus (b \oplus c)$ for all $a,b,c \in S$ and (2) there is identity element $e \in S$ such that for all $a \in S$, it is the case that $e \oplus a = a \oplus e = e$.

A commutative monoid is a moniod where the operator $\oplus$ is commutative.   That is $a \oplus b = b \oplus a$ for all $a,b \in S$.
\end{definition}

Next, we define a semiring.

\begin{definition}
A semiring is a set $R$ with two operators $\oplus$ and $\otimes$.  The $\oplus$ operator is referred to as addition and the $\otimes$ is referred to as multiplication. This is a semiring if, 

\begin{enumerate}
\item it is the case that $R$ and $\oplus$ are a commutative monoid with $0$ as the identity.
\item $R$ and $\otimes$ is a monoid with identity $1$.
\item the multiplication distributes over addition.  That is for all $a,b,c \in R$ it is the case that $a \otimes (b \oplus c) = (a\otimes b) \oplus (a \otimes c)$ and $ (b \oplus c) \otimes a  = (b\otimes a) \oplus (c \otimes a)$.
\item the $0$ element annihilates $R$. That is, $a \otimes 0 = 0$ and $0\otimes a = 0$ for all $a \in R$.
\end{enumerate}

A commutative  semiring is a semiring where the multiplication is commutative.  That is, $a \otimes b = b \otimes a$ for all $a,b \in S$.
\end{definition}

\subsection{FAQ-AI Query}

The input to FAQ-AI problem consists of three components:
\begin{itemize}
\item
A collection of relational tables $T_1, \ldots T_m$ with real-valued entries. Let
$J = T_1 \Join T_2 \Join \dots \Join T_m$ be the
design matrix that arises from the inner join of the tables. 
Let $n$ be an upper bound on the number of rows in any table $T_i$,
let $N$ be the number of rows in $J$, and let $d$ be the number of columns in $J$. 
\item
An FAQ $Q(J)$ that is either a SumProd query or a SumSum query. 
We define a SumSum query to be a query of the form:
\begin{align*}
   Q(J) = \bigoplus_{x \in J} \bigoplus_{i=1}^d F_{i}(x_i)
\end{align*}
where $(R, \oplus, I_0)$ is a commutative monoid over the arbitrary set $R$ with identity $I_0$.
We define a SumProd query to be a query of the form: 
\begin{align*}
   Q(J) =  \bigoplus_{x \in J} \bigotimes_{i=1}^d F_{i}(x_i)
\end{align*}
where $(R, \oplus, \otimes, I_0, I_1)$ is a commutative semiring over the arbitrary set $R$ with additive identity $I_0$
and multiplicative identity $I_1$. 
In each case, $x_i$ is the entry in
column $i$ of $x$, and
 $F_{i}$ is an arbitrary function with range $R$.
\item
A collection ${\mathcal L} = \{ (G_1, L_1), \ldots (G_b, L_b) \}$ 
where $G_i$ is a collection $\{g_{i,1}, g_{i,2}, \ldots g_{i, d} \}$ of $d$  functions that map the column domains to the reals, 
 and each $L_i$ is a scalar. 
\end{itemize}
FAQ-AI($k$) is a special case of FAQ-AI when
the cardinality of $\mathcal L$ is at most $k$.

The output for the FAQ-AI problem is the result
of the query on the subset of the design matrix 
that satisfies the additive inequalities. 
That is, the output for the FAQ-AI instance with a SumSum query is:
\begin{align}
\label{equality:rfaqli-sumsum}
      Q({\mathcal L}(J)) = \bigoplus_{x \in {\mathcal L}(J)} \bigoplus_{i=1}^d F_{i}(x_i)
\end{align}
And the output for the FAQ-AI instance with a SumProd query is:
\begin{align}
\label{equality:rfaqli-sumprod}
    Q({\mathcal L}(J)) = \bigoplus_{x \in {\mathcal L}(J)} \bigotimes_{i=1}^d F_{i}(x_i)
\end{align}
Here ${\mathcal L}(J)$ is the set of tuples $x \in J$ that satisfy all the additive inequalities in
$\mathcal L$, that is  for all $i \in [1, b]$, $\sum_{j=1}^d g_{i,j}(x_j) \le L_i$, where $x_j$ is the 
value of coordinate $j$ of $x$.

We now illustrate how some of the SVM related problems  can be reduced to FAQ-AI(1). 
First consider the problem of counting the number of negatively labeled points correctly classified by a linear separator.  Here each 
row $x$ of the design matrix $J$ conceptually consists of a point in $\mathbb{R}^{d-1}$, whose
coordinates are specified by the first $d-1$ columns in $J$, 
and a label in $\{1,-1\}$ in column $d$. Let  the  linear separator be defined by $\beta \in \mathbb{R}^{d-1}$.  A negatively labeled point $x$ is correctly classified if $\sum_{i=1}^{d-1} \beta_i x_i \leq 0$.  The number of such points can be counted using SumProd query with
one additive inequality as follows:
$\oplus$ is addition, 
$\otimes$ is multiplication, 
$F_i(x_i) = 1$ for all $i \in [d-1]$, 
$F_d(x_d) = 1$ if $x_d = -1$, and $F_d(x_d) = 0$ otherwise,
$g_{1,j}(x_j)= \beta_j x_j$ for $j \in [d-1]$, 
$g_{1,d}(x_d) = 0$, and
$L_1 =0$.
Next, consider the problem of finding the minimum distance to the linear separator of a correctly classified negatively labeled point. 
This distance can be computed using a SumProd query with one additive inequality as follows:
$\oplus$ is the binary minimum operator, 
$\otimes$ is addition, 
$F_i(x_i) = \beta_i x_i$ for all $i \in [d-1]$, 
$F_d(x_d) = 1$ if $x_d = -1$, and $F_d(x_d) = 0$ otherwise,
$g_{1,j}(x_j)= \beta_j x_j$ for $j \in [d-1]$, 
$g_{1,d}(x_d) = 0$, and
$L_1 =0$.

\end{document}